\tikzset{
	state/.append style={semithick,initial text=,inner sep=0},
	transition/.append style={->,>=stealth',shorten >=1pt,semithick},
	transition label/.append style={outer sep=3pt},
	every initial by arrow/.append style={initial text=,initial where=,transition},
}
\newcommand{\set}[1]{\mbox{$\{#1\}$}}
\newcommand{\IN}{\mathbb N}
\newcommand{\IZ}{{\mathbb Z}}
\DeclareMathOperator{\lcm}{lcm}
\newcommand*{\mA}{\ensuremath{A}\xspace}
\newcommand\s{s\xspace}
\newcommand\ow{\textsc1\xspace}
\newcommand\tw{\textsc2\xspace}
\newcommand\om{\textsc{om}\xspace}
\newcommand\am{\textsc{am}\xspace}
\newcommand\f{\textsc{f}\xspace}
\newcommand\dfa{\textsc{dfa}\xspace}
\newcommand\nfa{\textsc{nfa}\xspace}
\newcommand\twdfa{{\tw}{\dfa}\xspace}
\newcommand\twnfa{{\tw}{\nfa}\xspace}
\newcommand\owdfa{{\ow}{\dfa}\xspace}
\newcommand\ownfa{{\ow}{\nfa}\xspace}
\newcommand\twdfas{{\twdfa}\s}
\newcommand\twnfas{{\twnfa}\s}
\newcommand\owdfas{{\owdfa}\s}
\newcommand\ownfas{{\ownfa}\s}
\newcommand\flas{{\fla}\s}  %
\newcommand\dflas{{\dfla}\s} %
\newcommand*\la[1][1]{\ensuremath{#1}\textsc{\ifx&#1&\else-\fi la}\xspace}
\newcommand*\dla[1][1]{\textsc{d-}\la[#1]}%
\newcommand*\omla[1][1]{\om-\ensuremath{#1}\textsc{\ifx&#1&\else-\fi la}\xspace}
\newcommand*\amla[1][1]{\am-\ensuremath{#1}\textsc{\ifx&#1&\else-\fi la}\xspace}
\newcommand*\las[1][1]{{\la[#1]}\s}   %
\newcommand*\dlas[1][1]{{\dla[#1]}\s} %
\newcommand*\fla[1][1]{\f-\ensuremath{#1}\textsc{\ifx&#1&\else-\fi la}\xspace}
\newcommand*\dfla[1][1]{\textsc{d-}\fla[#1]}%
\newcommand*\lend{{\ensuremath{\mathord{\vartriangleright}}}\xspace}
\newcommand*\rend{{\ensuremath{\mathord{\vartriangleleft}}}\xspace}
\newcommand\bigoof[1]{\ensuremath{O\left(#1\right)}}
\newcommand*{\qed}{\mbox{}\nolinebreak\hfill~\raisebox{0.77ex}[0ex]{\framebox[1ex][l]{}}}
\newtheorem{theorem}{Theorem}
\newtheorem{lemma}{Lemma}
\newtheorem{corollary}{Corollary}
\newenvironment{proof}{\noindent{\em Proof.}}{\bigskip\noindent}
\newtheorem{ex}{Example}
\newenvironment{example-cont}[1]{\bigskip\noindent\textbf{Example~\ref{#1}.~(cont.)\hspace{\labelsep}}}{\bigskip\noindent}
\title{Forgetting $1$-Limited Automata}
\author{Giovanni Pighizzini
\institute{Dipartimento di Informatica\\
Universit\`{a} degli Studi di Milano, Italy}
\email{pighizzini@di.unimi.it}
\and
Luca Prigioniero
\institute{Department of Computer Science\\
Loughborough University, UK}
\email{l.prigioniero@lboro.ac.uk}
}
\begin{document}
\maketitle

\begin{abstract}
\noindent
We introduce and investigate \emph{forgetting $1$-limited automata}, which are single-tape Turing machines that,
when visiting a cell for the first time,
replace the input symbol in it by a fixed symbol, so forgetting the original contents.
These devices have the same computational power as finite automata, namely they characterize the class of regular languages.
We study the cost in size of the conversions of forgetting $1$-limited automata, in both nondeterministic and deterministic cases,
into equivalent  one-way nondeterministic and deterministic automata, providing optimal bounds in terms of exponential or
superpolynomial functions. We also discuss the size relationships with two-way finite automata. In this respect,
we prove the existence of a language for which forgetting $1$-limited automata are exponentially larger than
equivalent minimal deterministic two-way automata.
\end{abstract}

\section{Introduction}
\label{sec:intro}

Limited automata have been introduced in 1967 by Hibbard, with the aim of generalizing the notion of
determinism for context-free languages~\cite{Hi67}. These devices regained attention in the last decade,
mainly from a descriptional complexity point of view, and they
have been considered in several papers, starting with~\cite{PP14,PP15}.
(For a recent survey see~\cite{Pig19}.)

In particular, \emph{$1$-limited automata} are single-tape nondeterministic Turing machines that are allowed to rewrite 
the content of each tape cell only in the first visit. They have the same computational power as finite 
automata~\cite[Thm.~12.1]{WW86}, but they can be extremely more succinct.
Indeed, in the worst case the size gap from the descriptions of $1$-limited automata to those of equivalent one-way
deterministic finite automata is double exponential~\cite{PP14}.

In order to understand this phenomenon better, we recently studied two restrictions of $1$-limited automata~\cite{PP23b}.
In the first restriction, called \emph{once-marking $1$-limited automata}, during each computation the machine can make only one
change to the tape, just marking exactly one cell during the first visit to it. We proved that, under this restriction, a double exponential
size gap to one-way deterministic finite automata remains possible.

In the second restriction, called \emph{always-marking $1$-limited automata}, each tape cell is marked during the first visit.
In this way, at each step of the computation, the original content in the cell remains available, together with the information
saying if it has been already visited at least one time. In this case, the size gap to one-way deterministic finite automata reduces
to a single exponential. However, the information about which cells have been already visited still gives extra 
descriptional power. In fact, the conversion into equivalent two-way finite automata in the worst case costs exponential in size,
even if the original machine is deterministic and the target machine is allowed to make nondeterministic choices.

A natural way to continue these investigations is to ask what happens if in each cell the information about the original input
symbol is lost after the first visit. This leads us to introduce and study the subject of this paper, namely
\emph{forgetting $1$-limited automata}. These devices are $1$-limited automata in which, during the first visit to
a cell, the input symbol in it is replaced with a unique fixed symbol. Forgetting automata have been introduced in the
literature longtime ago~\cite{JMP93}. Similarly to the devices we consider here, they can use only one fixed symbol
to replace symbols on the tape. However, the replacement is not required to happen in the first visit, 
so giving the possibility to recognize more than regular languages.
In contrast, being a restriction of~$1$-limited automata, forgetting $1$-limited automata recognize only regular languages.

In this paper, first we study the size costs of the simulations of forgetting $1$-limited automata, in both nondeterministic
and deterministic versions, by one-way finite automata. 
The upper bounds we prove are exponential,
when the simulated and the target machines are nondeterministic and deterministic, respectively. In the other cases they are superpolynomial. These bounds are obtained starting from the conversions of
always-marking $1$-limited automata into one-way finite automata presented in~\cite{PP23b}, whose costs, in the case we are considering,
can be 
reduced using techniques and results derived in the context of automata over a one-letter alphabet~\cite{Ch86,MP01}.
We also provide witness languages showing that these upper bounds cannot be improved asymptotically.

In the last part of the paper we discuss the relationships with the size of two-way finite automata, which are not completely clear.
We show that losing the information on the input
content can reduce the descriptional power. In fact, we show languages for which forgetting $1$-limited automata, even
if nondeterministic, are exponentially larger than minimal two-way deterministic finite automata.
We conjecture that also the converse can happen. In particular we show a family of languages for which we conjecture that two-way
finite automata, even if nondeterministic, must be significantly larger than minimal deterministic forgetting $1$-limited
automata.

\section{Preliminaries}
\label{sec:prel}

In this section we recall some basic definitions useful in the paper.
Given a set~$S$,
$\#{S}$~denotes its cardinality and~$2^S$ the family of all its subsets.
Given an alphabet~$\Sigma$ and a string~$w\in\Sigma^*$,
$|w|$ denotes the length of~$w$,
$|w|_a$ the number of occurrences of~$a$ in~$w$, and
$\Sigma^k$ the set of all strings on~$\Sigma$ of length~$k$.

We assume the reader to be familiar with notions from formal languages and automata
theory, in particular with the fundamental variants of finite automata
(\owdfas, \ownfas, \twdfas, \twnfas, for short, where \ow/\tw mean
\emph{one-way}/\emph{two-way} and \textsc{d}/\textsc{n} mean
\emph{deterministic}/\emph{nondeterministic}, respectively).
For any unfamiliar terminology see, e.g.,~\cite{HU79}.

A \emph{$1$-limited automaton} (\la, for short)
is a tuple $\mA=(Q,\Sigma,\Gamma,\delta,q_I,F)$,
where~$Q$ is a finite \emph{set of states},
$\Sigma$ is a finite \emph{input alphabet},
$\Gamma$ is a finite \emph{work alphabet} such that~$\Sigma \cup
\{\lend,\rend\} \subseteq \Gamma$,
$\lend,\rend \notin \Sigma$
are two special symbols,
called the \emph{left} and the \emph{right end-markers},
$\delta:Q\times\Gamma\rightarrow 2^{Q\times(\Gamma\setminus\{\lend,\rend\})\times\{-1,+1\}}$ is the \emph{transition function},
and~$F\subseteq Q$ is a set of final states.
At the beginning of the computation,
the input word~$w\in\Sigma^*$ is stored onto the tape surrounded by the two end-markers,
the left end-marker being in position zero
and
the right end-marker being in position~$|w|+1$.
The head of the automaton is on cell~$1$ and the state of the finite control is the \emph{initial state}~$q_I$.

In one move,
according to~$\delta$ and the current state,
\mA reads a symbol from the tape,
changes its state,
replaces the symbol just read from the tape with a new symbol,
and moves its head to one position forward or backward.
Furthermore, the head cannot pass the end-markers,
except at the end of computation,
to accept the input, as explained below.
Replacing symbols is allowed to modify the content of each cell only
during the first visit,
with the exception of the cells containing the end-markers,
which are never modified.
Hence, after the first visit, a tape cell is ``frozen''.
More technical details can be found in~\cite{PP14}.

The automaton \mA accepts an input~$w$ if and only if there is a computation path that
starts from the initial state~$q_I$ with the input tape containing~$w$
surrounded by the two end-markers and the head on the first input cell, and
which ends in a \emph{final state}~$q\in F$ after passing the right
end-marker.
The device \mA~is said to be \emph{deterministic} (\dla, for short) whenever~$\#{\delta(q,\sigma)}\le 1$, for every $q\in Q$ and $\sigma\in\Gamma$.

We say that the \la \mA is a \emph{forgetting \la} (for short \fla or \dfla in the deterministic case), when
there is only one symbol~$Z$ that is used to replace symbols in the first visit, i.e., the work alphabet 
is~$\Gamma=\Sigma \cup \{Z\}\cup\{\lend,\rend\}$, with~$Z\notin\Sigma$ and
if~$(q,A,d)\in\delta(p,a)$ and~$a\in\Sigma$ then~$A=Z$.

Two-way finite automata  are limited automata in which no rewritings are possible;
one-way finite automata can scan the input in a one-way fashion only.
A finite automaton is, as usual, a tuple~$(Q,\Sigma,\delta,q_I,F)$,
where,
analogously to \las,
~$Q$ is the finite set of states,
$\Sigma$ is the finite input alphabet,
$\delta$ is the transition function,
$q_I$ is the initial state,
and~$F$ is the set of final states.
We point out that for two-way finite automata we assume the same accepting conditions as for \las. 

Two-way machines 
in which the direction of the head can change only at the end-markers
are said to be \emph{sweeping}~\cite{Sip80b}.

\medskip

In this paper we are interested in comparing the size of machines.
The \emph{size} of a model
is given by the total number of symbols used to write down its description.
Therefore,
the size of \las is bounded by a polynomial
in the number of states and of work symbols,
while,
in the case of finite automata,
since no writings are allowed,
the size is linear in the number of instructions and states,
which is bounded by a polynomial in the number of states
and in the number of input symbols.
We point out that, since~\flas use work alphabet~$\Gamma=\Sigma \cup \set{Z}\cup\{\lend,\rend\}$, $Z \notin \Sigma$,
the relevant parameter for evaluating the size of these devices is their number of states,
differently than \las,
in which the size of the work alphabet is not fixed,
i.e., depends on the machine.

We now shortly recall some notions and results related to number theory that will be useful to obtain our cost estimations.
First, given two integers~$m$ and $n$, let us denote by~$\gcd(m,n)$ and by~$\lcm(m,n)$ their \emph{greatest common divisor} and
\emph{least common multiple}, respectively.

We remind the reader that each integer~$\ell>1$ can be factorized in a unique way as
product of powers of primes, i.e., as~$\ell=p_1^{k_1}\cdots p_r^{k_r}$,
where~$p_1<  \cdots < p_r$ are primes, and $k_1,\ldots,k_r > 0$.

In our estimations, we shall make use of the \emph{Landau's
function}~$F(n)$~\cite{La03,La09}, which plays an important role in the analysis of 
simulations among different types of unary automata~(e.g. \cite{Ch86,Ge07,MP01}).
Given a positive integer~$n$, let
\[
  F(n) = \max\{\lcm(\lambda_1,\ldots,\lambda_r)\;\mid\;
    \lambda_1+\cdots+\lambda_r=n\}\,,
\]
where $\lambda_1,\ldots,\lambda_r$ denote, for the time
being, arbitrary positive integers. Szalay~\cite{Sz80} gave
a sharp estimation of~$F(n)$ that, after some simplifications, can
be formulated as follows:
\[
  F(n)=e^{(1+o(1))\cdot\sqrt{n\cdot\ln n}}.
\label{e:estF}
\]
Note that the function~$F(n)$ grows less than~$e^n$, but more than each polynomial
in~$n$. In this sense we say that~$F(n)$ is a \emph{superpolynomial function}.

As observed in~\cite{GP12}, for each integer~$n>1$ the value of~$F(n)$ can also be
expressed as the maximum product of powers of primes, whose sum is bounded by~$n$,
i.e.,
\[
  F(n) =
    \max\{ p_1^{k_1}\cdots p_r^{k_r}
    \;\mid\; p_1^{k_1}+\cdots+p_r^{k_r}\leq n\text,\; p_1,\ldots,p_r  
    \mbox{ are primes, }
    \mbox{and } k_1,\ldots,k_r > 0 \}\text.
\]

\section{Forgetting $1$-Limited Automata vs.\ One-Way Automata}
\label{sec:toOneWay}

When forgetting 1-limited automata visit a cell for the first time,
they replace the symbol in it with a fixed symbol~$Z$,
namely they forget the original content.
In this way, each input prefix can be rewritten in a unique way.
As already proved for \emph{always-marking} \las, this prevents a double exponential
size gap in the conversion to \owdfas~\cite{PP23b}. However, in this case the upper bounds obtained for always-marking \las, can be
further reduced, using the fact that only one symbol is used to replace input symbols:

\begin{theorem}
\label{th:upperBoundF-LA}
  Let~$M$ be an~$n$-state \fla.
  Then $M$ can be simulated by a~\ownfa with at most~$n\cdot (5n^2+F(n))+1$ states
  and by a complete \owdfa with at most~$(2^n-1)\cdot(5n^2+F(n))+2$ states.
\end{theorem}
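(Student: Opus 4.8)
The plan is to exploit the defining feature of a \fla. Since every cell is overwritten by the single symbol~$Z$ on its first visit, and the head can never cross the left end-marker~\lend, the already visited portion of the tape is always a \emph{prefix} of the input, and its contents are always the unary word $Z^{j}$, where~$j$ is the length of that prefix. Hence the behaviour of this frozen block depends only on its length~$j$, not on the original symbols it once contained. I make this precise by following the \emph{frontier}, i.e.\ the boundary between the frozen prefix and the still unvisited suffix. Because the visited region is always a prefix, each cell~$j$ is entered for the first time exactly once, and these first entries occur in the order $1,2,\dots,|w|$; for a fixed computation path let~$P_j$ be the state in which cell~$j$ is first entered. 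When the head first reads cell~$j$ it sees the genuine input symbol~$a_j$, rewrites it to~$Z$, and moves: if it moves right then $P_{j+1}$ is the new state directly, while if it moves left it performs an \emph{excursion} inside the frozen block of length~$j$ (unary, reflected on the left by~\lend) and either re-emerges at cell~$j+1$ in some state, or loops forever and the path dies. Writing $\beta_j\colon Q\to 2^{Q}$ for the relation mapping the state entering this block to the set of states in which cell~$j+1$ can next be reached, the crucial point is that~$\beta_j$ depends only on~$j$.

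A one-way machine reading $w=a_1\cdots a_m$ can simulate this frontier process from left to right, keeping as its state the pair consisting of the current first-entry state~$P_j$ and whatever information about~$j$ is needed to recover~$\beta_j$: on reading~$a_j$ it applies~$\delta(P_j,a_j)$ and, in the left-moving case, the relation~$\beta_j$. The amount of information about~$j$ that matters is bounded because the sequence $(\beta_j)_{j\ge0}$ is ultimately periodic. Indeed, the excursion behaviour inside a unary block with one reflecting boundary is governed by iterating the transitions over the~$n$ states, so by the analysis of unary (two-way) automata~\cite{Ch86,MP01} it stabilises after a pre-period of order~$n^2$ and then repeats with a period dividing the least common multiple of the cycle lengths occurring when~$Z$ is read; since these lengths sum to at most~$n$, that period is at most~$F(n)$ by the very definition of Landau's function. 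The same periodic structure, with the \emph{same} underlying period, governs the \emph{endgame}: once the frontier reaches the right end-marker the whole tape is the unary word $Z^{m}$ enclosed between \lend and \rend, and whether the head can eventually pass \rend in a final state is again an ultimately periodic predicate of~$m$. Folding both periodicities into a single ``phase'' counter, the simulator need only distinguish at most $5n^2+F(n)$ values of~$j$, the constant absorbing the $O(n^2)$ pre-periods and the constants in the cited estimates.

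This yields the \ownfa directly: its states are the pairs (first-entry state, phase), of which there are $n\cdot(5n^2+F(n))$, plus one bookkeeping state for initialisation/acceptance, giving the claimed bound $n\cdot(5n^2+F(n))+1$; nondeterminism arises both from~$\delta$ and from the relational nature of each~$\beta_j$, and a pair is accepting precisely when its phase and first-entry state admit a winning endgame. For the deterministic bound I apply the subset construction, but with the key observation that the phase component evolves \emph{deterministically}, being a function only of the number of symbols already read. Consequently every element of a reachable set shares the same phase, so a deterministic state is a pair (phase, subset of~$Q$); counting the $2^n-1$ non-empty subsets against the $5n^2+F(n)$ phases, and adding a sink to make the automaton complete together with one further bookkeeping state, gives $(2^n-1)\cdot(5n^2+F(n))+2$.

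The main obstacle is the second paragraph: establishing the \emph{sharp} periodicity of the frozen-block behaviours and of the acceptance endgame, and in particular that the pre-period is of order~$n^2$ while the common period is bounded by~$F(n)$, so that a phase set of size $5n^2+F(n)$ really suffices. This is exactly where the general construction for always-marking \las of~\cite{PP23b}—in which the frozen prefix retains its original symbols and thus admits exponentially many behaviours—can be improved: the transition-table bookkeeping is replaced by the much coarser length-indexed behaviours available in the unary setting, after which the number-theoretic bounds on unary automata apply. Everything else—the correctness of the frontier simulation, the accounting of the additive constants, and the determinisation—is routine once this periodic structure is in place.
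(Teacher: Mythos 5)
Your proposal is correct and follows essentially the same route as the paper: your length-indexed excursion relations~$\beta_j$ are precisely the paper's transition tables~$\tau_m$ for frozen prefixes, the ultimate periodicity with pre-period~$O(n^2)$ (in fact~$5n^2$) and period at most~$F(n)$ is obtained from the same unary two-way automata results of~\cite{MP01} (short U-turns and pumpable traversals with pumping lengths summing to at most~$n$, whence the Landau bound), and your state accounting for the \ownfa and the determinized \owdfa (dedicated initial state, shared phase in each reachable subset, merged empty-set sink) matches the paper's counts exactly.
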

\begin{proof}
	First of all, we recall the argument for the conversion of \las into~\ownfas and \owdfas\
	presented~\cite[Thm.~2]{PP14} that, in turn, is derived from the technique to convert \twdfas into equivalent 
	\owdfas, presented in~\cite{She59}, and based on \emph{transitions tables}.
	
	Let us start by supposing that~$M=(Q,\Sigma,\Gamma,\delta,q_0,F)$ is an~$n$-state \la.

	Roughly, transition tables represent the possible behaviors of~$M$ on ``frozen'' tape segments.
	More precisely, given~$z\in\Gamma^*$ ,
	the \emph{transition table} associated with~$z$ is the binary
	relation~$\tau_z\subseteq Q\times Q$, consisting of all pairs~$(p,q)$ such that~$M$ has a computation path that
	starts in the state~$p$ on the rightmost symbol of a tape segment containing $\lend z$,
	ends reaching the state~$q$ by leaving the same tape segment to the right side, i.e.,
	by moving from the rightmost cell of the segment to the right, and
	does not visit any cell outside the segment.
	
	A \ownfa~$A$ can simulate~$M$ by keeping in the finite control two components:
  	\begin{itemize}
  	\item The transition table corresponding to the part of the tape at the left of the head.
  		This part has been already visited and, hence, it is frozen.
  	\item The state in which the simulated computation of~$M$ reaches the current tape position.
	\end{itemize}
	Since the number of transition tables is at most~$2^{n^2}$, the number of states in the resulting \ownfa~$A$ is bounded by~$n\cdot 2^{n^2}$.
  
	Applying the subset construction, this automaton can be converted into an equivalent deterministic
	one, with an exponential increasing in the number of states, so obtaining a double exponential 
  	number of states in~$n$.
	In the general case, this number cannot be reduced due to the fact that different computations of~$A$,
	after reading the same input, could keep in the control different transition tables, depending on the fact that~$M$
	could replace the same input by different strings.
	
	We now suppose that~$M$ is a \fla. In this case each input string can be replaced by a unique string.
	This would reduce the cost of the conversion to \owdfas to a single exponential.
	Indeed, it is possible to convert the \ownfa~$A$ obtained from~$M$ into an equivalent \owdfa that keeps
	in its finite control the \emph{unique} transition table for the part of the tape scanned so far
	(namely, the same first component as in the state of~$A$),
	and the set of states that are reachable by~$M$ when entering the current tape cell (namely,
	a set of states that can appear in the second component of~$A$, while entering the
	current tape cell). This leads to an upper bound of~$2^n\cdot 2^{n^2}$ states for the resulting \owdfa.
	We can make a further improvement, reducing the number of transition tables
	used during the simulation.
	Indeed we are going to prove that only a subset of all the possible~$2^{n^2}$ transition tables can appear during the simulation.
	
	Since only a fixed symbol~$Z$ is used to replace input symbols on the tape,
	the transition table when the head is in a cell depends only on the position of the cell and not on the initial tape content. 
	
	For each integer~$m\geq 0$, 
	let us call~$\tau_m$ the transition table corresponding to a frozen tape segment of length~$m$, namely
	the transition table when the head of the simulating  one-way automaton is on the tape cell~$m+1$.
	We are going to prove that the sequence~$\tau_0,\tau_1,\ldots,\tau_m,\ldots$ is ultimately periodic,
	with period length bounded by~$F(n)$ and, more precisely, $\tau_m=\tau_{m+F(n)}$ for each~$m>5n^2$.
	
	The proof is based on the analysis of computation paths in unary \twnfas carried on in~\cite[Section~3]{MP01}.
	Indeed, we can see the parts of the computation on a frozen tape segment as computation paths of a unary \twnfa.
	More precisely, by definition, for~$p,q\in Q$, $\tau_m(p,q)=1$ if and only if there is a computation
	path~$C$ that enters the frozen tape segment of length~$m$ from the right
	in the state~$p$ and, after some steps, exits the segment to the right in the state~$q$.
	Hence, during the path~$C$ the head can visit only frozen cells (i.e., the cells in positions~$1,\ldots,m$)
	of the tape, and the left end-marker.
	There are two possible cases:
	\begin{itemize}
	\item\emph{In the computation path~$C$ the head never visits the left end-marker.}\\
	A path of this kind is also called \emph{left U-turn}. Since it does not depend on the position of the left end-marker,
	this path will also be possible, suitably shifted to the right, on each frozen segment of length~$m'>m$.
	Hence~$\tau_{m'}(p,q)=1$ for each~$m'\geq m$.
	Furthermore, it has been proven that if there is a left U-turn which starts in the state~$p$ on cell~$m$, 
	and ends in state~$q$, then there exists another left U-turn satisfying the same constraints, in which the
	head never moves farther than~$n^2$ positions to the left of the position~$m$~\cite[Lemma~3.1]{MP01}. 
	So, such a ``short'' U-turn can be shifted to the left, provided that the tape segment is longer than~$n^2$.
	
	Hence, in this case~$\tau_{m}(p,q)=1$ implies~$\tau_{m'}(p,q)=1$ for each~$m'>n^2$.
	
	\item\emph{In the computation path~$C$ the head visits at least one time the left end-marker.}\\ 
	Let~$s_0,s_1,\ldots,s_k$, $k\geq 0$, be the sequence of the states in which~$C$ visits the left
	end-marker. We can decompose~$C$ in a sequence of computation paths~$C_0, C_1,\ldots, C_k,C_{k+1}$, where:
	\begin{itemize}
		\item $C_0$ starts from the state~$p$ with the head on the cell~$m$ and ends in~$s_0$ when the head reaches the left 
		end-marker. $C_0$ is called \emph{right-to-left traversal} of the frozen segment.
		\item For~$i=1,\ldots,k$, $C_i$ starts in state~$s_{i-1}$ with the head on the left end-marker and
		ends in~$s_i$, when the head is back to the left end-marker. $C_i$ is called \emph{right U-turn}.
		Since, as seen before for left U-turns, each right U-turn can always be replaced by a ``short'' right
		U-turn, without loss of generality we suppose that~$C_i$ does not visit more than~$n^2$ cells to the right of 
		the left end-marker.
		\item $C_{k+1}$ starts from the state~$s_k$ with the head on the left end-marker and ends in~$q$, when
		the head leaves the segment, moving to the right of the cell~$m$.
		$C_{k+1}$ is called \emph{left-to-right traversal} of the frozen segment.
	\end{itemize}
		
	From~\cite[Theorem~3.5]{MP01}, there exists a set of positive 
	integers~$\{\ell_1,\ldots,\ell_r\}\subseteq\{1,\ldots,n\}$ satisfying~$\ell_1+\cdots+\ell_r\leq n$ such that for~$m\geq n$,
	if a frozen tape segment of length~$m$ can be (left-to-right or right-to-left) traversed from a state~$s$ 
	to a state~$s'$ then there is an index~$i\in\{1,\ldots,r\}$ such that, for each~$\mu>\frac{5n^2-m}{\ell_i}$,
	a frozen tape segment of length~$m+\mu\ell_i$ can be traversed (in the same direction)
	from state~$s$ to state~$s'$.
	This was proved by showing that for~$m>5n^2$ a traversal from~$s$ to~$s'$ of a segment of length~$m$ can always be ``pumped'' to
	obtain a traversal of a segment of length~$m'=m+\mu\ell_i$, for~$\mu>0$, and, furthermore, the segment can be ``unpumped''
	by taking~$\mu<0$, provided that the resulting length~$m'$ is greater than~$5n^2$.
	
	Let~$\ell$ be the least common multiple of~$\ell_1,\ldots,\ell_r$.
	If~$m>5n^2$, from the original computation path~$C$, by suitably pumping or unpumping the parts~$C_0$ and~$C_{k+1}$,
	and without changing~$C_i$, for~$i=1,\ldots,k$, for each~$m'=m+\mu\ell>5n^2$, with $\mu\in\IZ$,
	we can obtain a computation path that enters a frozen segment of length~$m'$ from the right in the state~$p$
	and exits the segment to the right in the state~$q$.
	\end{itemize}
\noindent
	By summarizing, from the previous analysis we conclude that for all~$m,m'>5n^2$, if~$m\equiv m'\pmod\ell$
	then~$\tau_m=\tau_{m'}$. Hence, the transition tables used in the simulation are at most~$5n^2+\ell$.
	Since, by definition, $\ell$ cannot exceed~$F(n)$, we obtain the number of different transitions tables that
	are used in the simulation is bounded by~$5n^2+1+F(n)$.
	
\medskip
	According with the construction outlined at the beginning of the proof, from the \fla~$M$ we can obtain a \ownfa~$A$ 
	that, when the head reaches the tape cell~$m+1$, has in the first component of its finite control the transition table~$\tau_m$,
	and in the second component the state in which the cell~$m+1$ is entered for
	the first time during the simulated computation. Hence the total number of states of~$A$ is bounded by~$n\cdot (5n^2+1+F(n))$.
	
	We observe that, at the beginning of the computation, the initial state is the pair containing the transition matrix~$\tau_0$
	and the initial state of~$M$. Hence, we do not need to consider other states with~$\tau_0$ as first component, unless~$\tau_0$
	occurs in the sequence~$\tau_1,\ldots,\tau_{5n^2+F(n)}$. This allows to reduce the upper bound to~$n\cdot (5n^2+F(n))+1$
	
	If the simulating automaton~$A$ is a \owdfa, then first component does not change, while the second component contains the set of
	states in which the cell~$m+1$ is entered for the first time during all possible computations
	of~$M$. This would give a~$2^n\cdot (5n^2+F(n))+1$ state upper bound.
	However, if the set in the second component is empty then the computation of~$M$ is rejecting, regardless what is the remaining
	part of the input and what has been written on the tape. Hence, in this case, the simulating \owdfa can
	enter a sink state.
	This allows to reduce the upper bound to~$(2^n-1)\cdot (5n^2+F(n))+2$.
\qed
\end{proof}

\subsection*{Optimality: The Language~${\cal L}_{n,\ell}$}

We now study the optimality of the state upper bounds presented
in \cref{th:upperBoundF-LA}.
To this aim, we introduce a family of languages~${\cal L}_{n,\ell}$,
that are defined with respect to integer parameters~$n,\ell>0$.

Each language in this family is composed by all strings
of length multiple of~$\ell$ belonging to the language~$L_{MF_n}$ which is accepted by the $n$-state 
\ownfa~$A_{MF_n}=(Q_n,\{a,b\},\delta_n,q_0,\{q_0\})$ depicted in \cref{fig:mf}, 
i.e., ${\cal L}_{n,\ell}=L_{MF_n}\cap (\{a,b\}^{\ell})^*$.

The automaton~$A_{MF_n}$ was proposed longtime ago by Meyer and Fischer as a witness of the exponential state
gap from \ownfas to \owdfas~\cite{MF71}. Indeed, it can be proved that the smallest \owdfa accepting it has exactly~$2^n$ states.
In the following we shall refer to some arguments given in the proof of such result presented in~\cite[Thm.~3.9.6]{Sha08}.

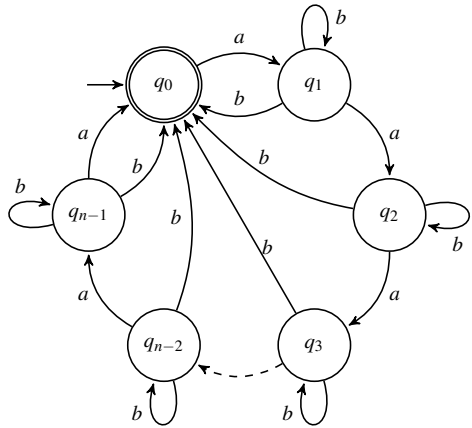
\begin{figure}[h]
	\centering
	\begin{tikzpicture}[font=\scriptsize]
		\def\stateangle{30}
		\def\statedistance{2cm}
		\path[inner sep=0]
			(0,0) node[state](5){$q_{n-1}$}
			
			++(60:\statedistance) node[state,accepting,initial,transition](0) {$q_0$}
			++(0:\statedistance) node[state](1){$q_1$}
			++(-60:\statedistance) node[state](2){$q_2$}

			(5)
			++(-60:\statedistance)  node[state](4){$q_{n-2}$}
			++(0:\statedistance)
		node[state](3){$q_3$};

		\path[transition]
			(1) edge[loop above] node[near end,right] {$b$} (1)
			(2) edge[loop right] node[near end,below] {$b$} (2)
			(3) edge[loop below] node[near end,left] {$b$} (3)
			(4) edge[loop below] node[near end,left] {$b$} (4)
			(5) edge[loop left] node[near end,above] {$b$} (5)

			(0) edge[bend left] node[above] {$a$} (1)
			(1) edge[bend left] node[right] {$a$} (2)
			(2) edge[bend left] node[right] {$a$} (3)
			(3) edge[bend left,dashed]  (4)
			(4) edge[bend left] node[left] {$a$} (5)
			(5) edge[bend left] node[left] {$a$} (0)

			(1) edge[bend left] node[above] {$b$} (0)
			(2.170) edge[out=170,in=-45] node[above] {$b$} (0.-45)
			(3) edge[] node[above,near start] {$b$} (0)
			(4.70) edge[out=70,in=-75] node[left] {$b$} (0.-75)
			(5) edge[bend right] node[left] {$b$} (0)
		;
	\end{tikzpicture}
	\caption{The \ownfa~$A_{MF_n}$ accepting the language of Meyer and Fischer.}
	\label{fig:mf}
\end{figure}

Let us start by presenting some simple state upper bounds for the recognition of~${\cal L}_{n,\ell}$ by
one-way finite automata.

\begin{theorem}
\label{th:Lnl-upperbound1}
	For every two integers~$n,\ell>0$, there exists a complete \owdfa accepting~${\cal L}_{n,\ell}$
	with~$(2^n-1)\cdot\ell+1$ states and a \ownfa with~$n\cdot\ell$ states.
\end{theorem}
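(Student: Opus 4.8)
The plan is to construct the two automata directly, exploiting the structure ${\cal L}_{n,\ell}=L_{MF_n}\cap(\{a,b\}^\ell)^*$ as an intersection of $L_{MF_n}$ with the length constraint ``length divisible by $\ell$.'' Since both component conditions can be checked by small one-way automata, I would build a product (cross-product) construction that simultaneously tracks the computation of $A_{MF_n}$ and a modulo-$\ell$ counter for the input length. For the nondeterministic bound, I would take the $n$-state \ownfa $A_{MF_n}$ and combine it with an $\ell$-state cyclic counter that advances on every input symbol and accepts only in residue class $0$. The product has state set $Q_n\times\{0,1,\ldots,\ell-1\}$, giving exactly $n\cdot\ell$ states, with a state accepting iff its first component is the accepting state $q_0$ of $A_{MF_n}$ and its second component is $0$; this yields the claimed $n\cdot\ell$-state \ownfa.

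For the \owdfa bound the natural idea is to take the minimal \owdfa for $L_{MF_n}$, which by the cited Meyer--Fischer result (\cite[Thm.~3.9.6]{Sha08}) has $2^n$ states, and cross it with the $\ell$-state length counter, giving $2^n\cdot\ell$ states. To obtain the sharper $(2^n-1)\cdot\ell+1$ bound I would observe that the $2^n$-state \owdfa obtained from $A_{MF_n}$ by the subset construction has a single dead/sink state corresponding to the empty subset of $Q_n$ (a state from which acceptance is impossible). In the product automaton, all $\ell$ copies of this sink state can be merged into one global sink, since once the $L_{MF_n}$-component is dead the input is rejected irrespective of the length residue. This collapses $\ell$ product states into a single state, reducing the count from $2^n\cdot\ell$ to $(2^n-1)\cdot\ell+1$, as required, and the resulting automaton is still complete.

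The main step requiring care is justifying that the Meyer--Fischer \owdfa indeed has exactly one such mergeable sink state and that merging it across all residue classes preserves the language. I would need to verify from the subset-construction applied to $A_{MF_n}$ that the empty-set state is the unique state from which no accepting state is reachable, so that its $\ell$ length-annotated copies are genuinely equivalent and can be collapsed without affecting acceptance of any string in ${\cal L}_{n,\ell}$. This is the part where I expect the only real obstacle, although it should follow readily from the structure of $A_{MF_n}$ and the standard analysis in \cite[Thm.~3.9.6]{Sha08}; the remaining verifications (correctness of the product construction and the exact state counts) are routine.
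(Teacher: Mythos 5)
Your proposal is correct and follows essentially the same route as the paper's proof: a product construction of $A_{MF_n}$ with an $\ell$-state length counter for the \ownfa bound, and the subset construction followed by the product and the merging of the $\ell$ copies of the empty-set sink state for the \owdfa bound. The only remark: the ``care'' you flag is not actually needed, since uniqueness of the dead state is irrelevant for an upper bound --- it suffices that all $\ell$ product states with first component $\emptyset$ are non-accepting and closed under transitions, hence pairwise equivalent and mergeable into one sink, which is immediate.
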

\begin{proof}
	We apply the subset construction to convert the \ownfa~$A_{MF_n}$ into a \owdfa with~$2^n$ states and then, with the
	standard product construction, we intersect the resulting automaton with the trivial~$\ell$-state automaton accepting~$(\{a,b\}^{\ell})^*$.
	In this way we obtain a \owdfa with~$2^n\cdot\ell$ states for~${\cal L}_{n,\ell}$. However, all the states obtained from the
	sink state, corresponding to the empty set, are equivalent, so they can be replaced by a unique sink state.
	This allows to reduce the number of states to~$(2^n-1)\cdot\ell+1$.	
	
	In the case of \ownfas we apply the product construction to~$A_{MF_n}$ and the~$\ell$-state automaton accepting~$(\{a,b\}^{\ell})^*$,
	so obtaining a \ownfa  with~$n\cdot\ell$ states.
\qed
\end{proof}

We now study how to recognize~${\cal L}_{n,\ell}$ using two-way automata and \flas. In both cases we obtain sweeping machines.

\begin{theorem}
\label{th:Lnl-upperbound2}
Let~$\ell>0$ be an integer that factorizes~$\ell=p_1^{k_1}\cdots p_r^{k_r}$ as a product of prime powers
and~$o=r\bmod 2$. Then:
\begin{itemize}
	\item ${\cal L}_{n,\ell}$ is accepted by a sweeping \twnfa with~$n+p_1^{k_1}+\cdots+p_r^{k_r}+o$
	states, that uses nondeterministic transitions only in the first sweep.
	\item ${\cal L}_{n,\ell}$ is accepted by a sweeping \fla 
	with~$\max(n,p_1^{k_1}+\cdots+p_r^{k_r}+o)$ states that
	uses nondeterministic transitions only in the first sweep. 
	\item ${\cal L}_{n,\ell}$ is accepted by a sweeping \twdfa with~$2n+p_1^{k_1}+\cdots+p_r^{k_r}+o$
	states.
\end{itemize}
\end{theorem}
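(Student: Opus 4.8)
The plan is to realize all three devices as \emph{sweeping} machines whose computation splits into three stages: a membership stage that tests $w\in L_{MF_n}$, a divisibility stage that tests $|w|\equiv 0\pmod\ell$, and a short final traversal that brings the head onto the right end-marker so that the machine may accept. The divisibility stage is common to the three constructions and rests on the Chinese Remainder Theorem: since $p_1^{k_1},\dots,p_r^{k_r}$ are pairwise coprime, $|w|$ is a multiple of $\ell$ if and only if $|w|\equiv 0\pmod{p_i^{k_i}}$ for every $i$. I would therefore use $r$ consecutive sweeps, the $i$-th of them running a deterministic counter modulo $p_i^{k_i}$ (hence $p_i^{k_i}$ states) that scans the whole tape and checks, at the end-marker it reaches, that the length is $0$ modulo $p_i^{k_i}$; consecutive sweeps go in opposite directions. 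This contributes $p_1^{k_1}+\cdots+p_r^{k_r}$ states. Since the machine accepts only by passing the right end-marker, one extra one-state sweep may be needed to reposition the head; the parity term $o=r\bmod 2$ accounts exactly for this repositioning.

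For the \twnfa and the \fla the membership stage is a single left-to-right sweep that simulates $A_{MF_n}$ directly, using its $n$ states and making nondeterministic choices (only here); it passes the stage iff it reaches the accepting state $q_0$ at the right end-marker. Stacking this sweep before the $r$ divisibility sweeps and the traversal yields the $n+p_1^{k_1}+\cdots+p_r^{k_r}+o$ bound for the \twnfa. The improvement to a maximum for the \fla comes from the forgetting property: during its first sweep the \fla overwrites every input cell by~$Z$, so in all later sweeps the tape carries only $Z$'s. Consequently the $n$ states playing an NFA-role fire only on symbols of $\Sigma$, whereas the $p_1^{k_1}+\cdots+p_r^{k_r}+o$ states playing a counting/traversal role fire only on $Z$. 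As these two groups of roles are triggered by disjoint tape symbols, they may be assigned to the same physical states, the transition function telling them apart by the symbol currently scanned. Hence a pool of $\max(n,\;p_1^{k_1}+\cdots+p_r^{k_r}+o)$ states suffices, with nondeterminism confined to the first sweep.

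For the \twdfa nondeterminism is unavailable, so $A_{MF_n}$ cannot be simulated in one pass; the key is to recognize $L_{MF_n}$ deterministically with only $2n$ states by exploiting its structure ($A_{MF_n}$ is deterministic on $a$, and on $b$ its sole nondeterminism is a reset to $q_0$). Peeling symbols from the right, I would show that $q_0$ is reachable on $w$ iff a right-to-left counter modulo $n$ --- decremented on $a$ and left unchanged on $b$ while nonzero --- either reaches the left end with value $0$, or meets a $b$ with value $0$, in which case acceptance is equivalent to the \emph{leading} block of $a$'s of $w$ having length $\not\equiv 0\pmod n$. The first condition is tested by an $n$-state right-to-left sweep and the second by an $n$-state left-to-right sweep counting the initial $a$'s, for $2n$ states in total; appending the divisibility sweeps and the traversal gives $2n+p_1^{k_1}+\cdots+p_r^{k_r}+o$.

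The main obstacle is the deterministic two-way recognition of $L_{MF_n}$ in linearly many states: establishing the right-to-left characterization above, and checking that the leading-$a$-block test is independent of where the critical $b$ lies, is where the real work sits, since $L_{MF_n}$ needs $2^n$ states as a \owdfa. The second delicate point is the state-identification argument for the \fla, which hinges precisely on the forgetting behaviour making the membership symbols and the counting symbol disjoint; without overwriting (as for the \twnfa) no such sharing is possible, explaining the gap between the sum and the maximum. The remaining bookkeeping --- alternating sweep directions, wiring the end-marker transitions between stages, and arranging the final side of the head so that the correction costs exactly $o=r\bmod 2$ states --- is routine.
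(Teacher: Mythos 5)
Your constructions for the \twnfa and the \fla are correct and essentially identical to the paper's: a first (and only nondeterministic) left-to-right sweep simulating~$A_{MF_n}$, then $r$ alternating divisibility sweeps with the parity term~$o$ paying for the final repositioning; and, for the \fla, the reuse of a single state pool because after the first sweep every cell contains~$Z$, so the states simulating~$A_{MF_n}$ (firing on symbols of~$\Sigma$) and the counting states (firing on~$Z$) never conflict.

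The \twdfa bullet is where your proposal has genuine gaps. Your right-to-left characterization of~$L_{MF_n}$ is correct --- it amounts to re-deriving explicitly the $2n$-state \owdfa for the reversal of~$L_{MF_n}$, which the paper simply cites from~\cite{PPS22} --- but the assembly around it does not give the stated bound. First, splitting the membership test into an $n$-state right-to-left sweep plus an $n$-state left-to-right sweep cannot be realized with $2n$ states by a deterministic sweeping machine: when the right-to-left counter meets a~$b$ with value~$0$, the machine must finish that sweep while remembering the event, and every state used in this ``waiting'' phase moves left on~$a$ and~$b$, so it cannot coincide with any state of the left-to-right counting sweep, whose transitions on the same symbols must move right; one can check (e.g., $aba$ versus $baa$ for $n=2$) that the event cannot be recovered from the counter value at~\lend, so extra states are unavoidable. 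The fix is to do everything in a single right-to-left sweep: after the event, switch to a second mode of $n$ states that counts $a$'s since the last $b$ seen, resetting on each~$b$, so that at~\lend this counter holds exactly the length of the leading $a$-block modulo~$n$. Second, your sweep ordering (membership first) cannot yield the parity term~$o$: the head starts at the left, so a right-to-left membership sweep needs a preceding positioning sweep, and counting sweeps shows this ordering costs $2n+p_1^{k_1}+\cdots+p_r^{k_r}+2-o$, which is the paper's first, unoptimized bound. The paper reaches~$o$ by interleaving: the initial left-to-right sweep already checks divisibility by~$p_1^{k_1}$, the second (right-to-left) sweep checks membership via the reversal \owdfa, and the remaining $r-1$ divisibility checks follow, for $r+1$ sweeps in total.
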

\begin{proof}
In the first sweep, the \twnfa for~${\cal L}_{n,\ell}$, using~$n$ states, simulates the \ownfa~$A_{MF_n}$
to check if the input belongs to~$L_{MF_n}$. 
Then, it makes one sweep for each~$i=1,\ldots,r$ (alternating a right-to-left sweep with a left-to-right sweep),
using~$p_i^{k_i}$ states in order to check whether~$p_i^{k_i}$ divides the input length. If the outcomes of all these
tests are positive, then the automaton accepts. When~$r$ is even, the last sweep ends with the head on the right end-marker.
Then, moving the head one position to the right, the automaton can reach the accepting configuration. 
However, when~$r$ is odd, the last sweep ends on the left end-marker. Hence, using an extra state, the head can traverse
the entire tape to finally reach the accepting configuration.

A \fla can implement the same strategy. However, to check if the tape length is a multiple of~$\ell$,
it can reuse the~$n$ states used in the first sweep, plus~$p_1^{k_1}+\cdots+p_r^{k_r}+o-n$ extra states 
when~$n < p_1^{k_1}+\cdots+p_r^{k_r}+o$. This is due to the fact that the value of the transition function
depends on the state and on the symbol in the tape cell and that, in the first sweep, all the input symbols
have been replaced by~$Z$. 

Finally, we can implement a \twdfa that recognizes~${\cal L}_{n,\ell}$ by firstly making~$r$ sweeps to
check whether~$p_i^{k_i}$ divides the input length, $i=1,\ldots,r$. If so, then the automaton,
after moving the head from the left to the right end-marker in case of~$r$ 
even, makes a further sweep from right to left, to simulate a \owdfa accepting the reversal of~$L_{MF_n}$, which
can be accepted using~$2n$ states~\cite{PPS22}. If the simulated automaton accepts, then the machine can make a further sweep,
by using a unique state to move the head from the left end-marker to the right one, and then accept.
The total number of states is~$2n+p_1^{k_1}+\cdots+p_r^{k_r}+2-o$. This number can be slightly reduced as follows:
in the first sweep (which is from left to right) the automaton checks the divisibility of the input length by~$p_1^{k_1}$;
in the second sweep (from right to left) the automaton checks the membership to~$L_{MF_n}$; in the remaining~$r-1$
sweeps (alternating left-to-right with right-to-left sweeps), it checks the divisibility for~$p_i^{k_i}$, $i=2,\ldots,r$. 
So, the total number of sweeps for these checks is~$r+1$.
This means that, when~$r$ is even, the last sweep ends on the right end-marker and the machine can immediately move
to the accepting configuration. Otherwise the head needs to cross the input from left to right, using an extra state.
 \qed
\end{proof}

As a consequence of \cref{th:Lnl-upperbound2}, in the case of \flas we immediately obtain:

\begin{corollary}
\label{cor:Lnl-upperbound3}
	For each~$n>0$ the language~${\cal L}_{n,F(n)}$ is accepted by a \fla with at most~$n+1$ states.
\end{corollary}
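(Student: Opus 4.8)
The plan is to simply instantiate the \fla bound of \cref{th:Lnl-upperbound2} at the parameter value $\ell=F(n)$ and then bound the resulting state count. First I would fix a factorization $F(n)=p_1^{k_1}\cdots p_r^{k_r}$ of $F(n)$ into prime powers, choosing it to be one that realizes the maximum in the product-characterization of~$F(n)$ recalled in \cref{sec:prel}, namely $F(n)=\max\{p_1^{k_1}\cdots p_r^{k_r}\mid p_1^{k_1}+\cdots+p_r^{k_r}\leq n,\ p_1,\ldots,p_r\text{ primes},\ k_1,\ldots,k_r>0\}$. The crucial consequence of this choice is that the sum of the prime powers occurring in the factorization of $F(n)$ satisfies $p_1^{k_1}+\cdots+p_r^{k_r}\leq n$; this inequality is the heart of the whole argument.

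Next I would apply the \fla item of \cref{th:Lnl-upperbound2} with $\ell=F(n)$ and the factorization just fixed. It produces a sweeping \fla for ${\cal L}_{n,F(n)}$ with $\max(n,\,p_1^{k_1}+\cdots+p_r^{k_r}+o)$ states, where $o=r\bmod 2\in\{0,1\}$. Combining the displayed inequality with the trivial bound $o\leq 1$ gives $p_1^{k_1}+\cdots+p_r^{k_r}+o\leq n+1$, so the second argument of the maximum never exceeds $n+1$, while the first argument equals $n$. Hence the number of states is at most $\max(n,n+1)=n+1$, which is precisely the claimed bound.

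Since this is a direct specialization of an already established theorem, I do not expect any genuine obstacle. The one subtle point worth flagging is that the argument must invoke the product-characterization of~$F(n)$ (sum of prime powers $\leq n$) rather than its original definition as a maximum of least common multiples over compositions summing \emph{exactly} to~$n$; it is the former that licenses the bound $p_1^{k_1}+\cdots+p_r^{k_r}\leq n$ on this particular factorization of~$F(n)$. It is also worth remarking that the parity correction~$o$ is the sole reason the bound reads $n+1$ rather than~$n$: when $r$ is even, $o$ vanishes and a bound of $n$ states already holds.
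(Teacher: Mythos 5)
Your proposal is correct and follows exactly the paper's own argument: the paper's proof likewise observes that if $F(n)=p_1^{k_1}\cdots p_r^{k_r}$ then $p_1^{k_1}+\cdots+p_r^{k_r}\leq n$ (via the product-characterization of $F(n)$ recalled in \cref{sec:prel}) and then invokes the \fla item of \cref{th:Lnl-upperbound2} with $\ell=F(n)$, so that $\max(n,\,p_1^{k_1}+\cdots+p_r^{k_r}+o)\leq n+1$. Your additional remarks on why the product-characterization (rather than the lcm definition) is the needed ingredient, and on the role of the parity term $o$, are accurate refinements of the same proof.
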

\begin{proof}
	If~$F(n)=p_1^{k_1}\cdots p_r^{k_r}$ then $p_1^{k_1}+\cdots+p_r^{k_r}\leq n\leq F(n)$.
	Hence, the statement follows from \cref{th:Lnl-upperbound2}.
\qed
\end{proof}

We are now going to prove lower bounds for the recognition of~${\cal L}_{n,\ell}$, in the case~$n$ and~$\ell$ are
relatively primes.

Let us start by considering the recognition by \owdfas. 

\begin{theorem}
\label{th:Lnl-lowerbound}
	Given two integers~$n,\ell>0$ with~$\gcd(n,\ell)=1$, each \owdfa accepting~${\cal L}_{n,\ell}$ must have at least~$(2^n-1)\cdot\ell+1$ states.
\end{theorem}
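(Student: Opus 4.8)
**

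The plan is to prove a matching lower bound of $(2^n-1)\cdot\ell+1$ states for any \owdfa recognizing ${\cal L}_{n,\ell}$ when $\gcd(n,\ell)=1$. The natural approach is a fooling-set / distinguishability argument: I will exhibit a collection of $(2^n-1)\cdot\ell+1$ strings that are pairwise inequivalent under the Myhill--Nerode relation of ${\cal L}_{n,\ell}$, forcing a complete minimal \owdfa to have at least that many states. The structure of the bound suggests the states should be indexed by a pair: a nonempty subset $S\subseteq Q_n$ (there are $2^n-1$ of these), recording the set of states reachable in $A_{MF_n}$ after reading some prefix, together with a residue $j\in\{0,1,\ldots,\ell-1\}$ recording the current input length modulo $\ell$; the ``$+1$'' accounts for a single dead/sink state corresponding to the empty subset.

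First I would recall from the cited analysis of the Meyer--Fischer automaton~\cite[Thm.~3.9.6]{Sha08} the key combinatorial fact: the subset construction applied to $A_{MF_n}$ produces exactly $2^n$ reachable and pairwise distinguishable subsets, so for every pair of distinct \emph{nonempty} subsets $S_1,S_2\subseteq Q_n$ there is a distinguishing suffix over $\{a,b\}^*$ witnessing membership in $L_{MF_n}$ for one and not the other. I would fix, for each nonempty $S$, a prefix $u_S$ with $\delta_n^*(q_0,u_S)=S$, and I may further arrange (by padding with appropriately chosen letters that do not change the reachable set, exploiting the cyclic structure of $A_{MF_n}$) control over $|u_S| \bmod \ell$. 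The candidate pairwise-inequivalent strings are then the $u_S$ suitably length-adjusted to realize each target residue $j$, giving $(2^n-1)\cdot\ell$ live classes, plus one string outside $L_{MF_n}$ for the sink.

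The distinguishability argument splits into two cases. If two candidate strings share the same reachable subset $S$ but have different residues $j_1\neq j_2 \pmod\ell$, I append a suffix $w$ that keeps the computation inside $L_{MF_n}$ (e.g.\ a word returning to $q_0$) whose length is chosen so that exactly one of $|u|+|w|$, $|u'|+|w|$ is a multiple of $\ell$; this is where $\gcd(n,\ell)=1$ is essential, since the reachable-set dynamics have period dividing $n$ in the $a$-cycle while the length constraint has period $\ell$, and coprimality lets me independently steer the set-component and the length-residue to separate the two strings. If instead the subsets differ, I use a Meyer--Fischer distinguishing suffix and then pad it by a multiple of $\ell$ (or otherwise correct its length mod $\ell$) so that the length condition is satisfied exactly when the $L_{MF_n}$-membership condition is, sending one string into ${\cal L}_{n,\ell}$ and the other out.

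The main obstacle I anticipate is the simultaneous control of both coordinates: a distinguishing suffix for the subset component has a fixed length, which perturbs the residue component, so I must show I can always repair the length modulo $\ell$ without destroying the membership distinction. The clean way to handle this is to use the coprimality $\gcd(n,\ell)=1$ together with the fact that appending $a^n$ (or another word of length a multiple of the cycle length) returns the reachable set to itself in $A_{MF_n}$ while shifting the length by $n$; iterating such length-$n$ neutral blocks lets me hit every residue class modulo $\ell$ since $n$ generates $\IZ/\ell\IZ$. Making this decoupling fully rigorous --- verifying that a neutral block genuinely preserves the reachable set and that the two independent periodicities combine via the Chinese-remainder-style argument to separate all $(2^n-1)\cdot\ell+1$ classes --- is the technical heart of the proof.
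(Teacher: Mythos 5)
Your proposal is correct and follows essentially the same route as the paper: a Myhill--Nerode distinguishability argument with $(2^n-1)\cdot\ell$ strings indexed by a nonempty reachable subset of the Meyer--Fischer automaton $A_{MF_n}$ together with a length residue modulo~$\ell$ (plus one string reaching the empty set for the sink), where $\gcd(n,\ell)=1$ is used, Chinese-remainder style, both to realize every (subset, residue) pair and to build distinguishing suffixes of the form $a^m$ that control membership in $L_{MF_n}$ and the length condition simultaneously. The only difference is a cosmetic one in the padding mechanism: you append subset-preserving neutral blocks $a^{kn}$ (valid, since the $a$-transitions form a deterministic $n$-cycle), whereas the paper instead lengthens the run of $b$'s inside Shallit's strings $w_S$; both devices leave the reachable subset unchanged, so they are interchangeable.
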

\begin{proof}
	Let~$Q_n=\{q_0,q_1,\ldots,q_{n-1}\}$ be the set of states of~$A_{MF_n}$ (see \cref{fig:mf}).
	First, we briefly recall some arguments from the proof presented in~\cite[Thm.~3.9.6]{Sha08}.
	For each subset~$S$ of~$Q_n$,
	we define a string~$w_S$ having the property that~$\delta_n(q_0,w_S)=S$.
	Furthermore,
	it is proved that all the strings so defined are pairwise distinguishable, so obtaining the state lower bound~$2^n$ 
	for each \owdfa equivalent to~$A_{MF_n}$.
	In particular, the string~$w_S$ is defined as follows:
	\begin{equation}
	w_S=\left\{
			\begin{array}{ll}
				b & \mbox{if~$S=\emptyset$};\\
				a^i&\mbox{if $S=\{q_i\}$};\\		
				a^{e_k-e_{k-1}}ba^{e_{k-1}-e_{k-2}}b\cdots a^{e_2-e_1}ba^{e_1},
				&\mbox{otherwise};\\
			\end{array}	
		\right.
		\label{eq:mf-shallit}
	\end{equation}
	where in the second case $S=\{q_i\}$, $0\leq i<n$, while in the third case~$S=\{q_{e_1},q_{e_2},\ldots,q_{e_k}\}$,
	$1<k\leq n$, and~$0\leq e_1<e_2<\cdots<e_k<n$.
	
	To obtain the claimed state lower bound in the case of the language~${\cal L}_{n,\ell}$,
	for each nonempty subset~$S$ of~$Q_n$ and each integer~$j$, with~$0\leq j<\ell$, 
	we define a string~$w_{S,j}$ which is obtained by suitably padding the string~$w_S$ in such a way
	that the set of states reachable from the initial state by reading~$w_{S,j}$ remains~$S$ and the length
	of~$w_{S,j}$, divided by~$\ell$, gives~$j$ as reminder.
	Then we shall prove that all the so obtained strings are pairwise distinguishable.
	Unlike~(\ref{eq:mf-shallit}), when defining~$w_{S,j}$ we do not consider the case~$S=\emptyset$.
	
	In the following, let us denote by~$f:\IN\times\IN\rightarrow\IN$ a function satisfying~$f(i,j)\bmod n=i$ and~$f(i,j)\bmod\ell=j$,
	for~$i,j\in\IN$. Since~$\gcd(n,\ell)=1$, by the Chinese Reminder Theorem, such a function always exists.
	
	For each non-empty subset~$S$ of~$Q_n$ and each integer~$j$, with~$0\leq j<\ell$, the string~$w_{S,j}$ is defined as:
	\begin{equation}
	w_{S,j}=\left\{
			\begin{array}{ll}
				a^{f(i,j)}&\mbox{if $S=\{q_i\}$};\\		
				a^{e_k-e_{k-1}}ba^{e_{k-1}-e_{k-2}}b\cdots a^{e_2-e_1}b^{H\ell-k-e_{k}+2+j}a^{e_1},
				&\mbox{otherwise};\\
			\end{array}	
		\right.
		\label{eq:mf-modified}
	\end{equation}
	where in the first case $S=\{q_i\}$, $0\leq i<n$, while in the second case~$S=\{q_{e_1},q_{e_2},\ldots,q_{e_k}\}$,
	$1<k\leq n$, $0\leq e_1<e_2<\cdots<e_k<n$, and~$H\geq 1$ is a fixed integer such that~$H\ell > 2n$ 
	(this last condition is useful to have~$H\ell-k-e_{k}+2+j>0$, in such a way that the last block of~$b$'s is
	always well defined and not empty).
	
	We claim and prove the following facts:
		
	\begin{enumerate}
		\item $|w_{S,j}|\bmod\ell = j$.\\
		If~$S=\{q_i\}$, then by definition $|w_{S,j}|\bmod\ell=f(i,j)\bmod\ell =j$.
		Otherwise, according to the second case in~(\ref{eq:mf-modified}), $S=\{q_{e_1},q_{e_2},\ldots,q_{e_k}\}$
		and~$|w_{S,j}|= e_k-e_{k-1} + 1 + e_{k-1}-e_{k-2} + 1 + \cdots + e_2-e_1 + H\ell-k-e_k+2+j + e_1$,
		which is equal to~$H\ell + j$.
				
		\item $\delta_n(q_0,w_{S,j})=S$.\\
		In the automaton~$A_{MF_n}$, all the transitions on the letter~$a$ are deterministic. Furthermore,
		by reading the string~$a^x$, $x>0$, from the state~$q_0$, the only reachable state is~$q_{x\bmod n}$.
		Hence, for the first case~$S=\{q_i\}$ in~(\ref{eq:mf-modified}) we have~$\delta_n(q_0,w_{S,j})=
		\{q_{f(i,j)\bmod n}\} = \{q_i\}$.
		
		For the second case,  we already mentioned that~$\delta_n(q_0,w_S)=S$.
		Furthermore~$w_{S,j}$ is obtained from~$w_S$ by replacing the rightmost~$b$
		by a block of more than one~$b$. From the transition diagram of~$A_{MF_n}$ we  observe that from each
		state~$q_i$, with~$i>0$, reading a~$b$ the automaton can either remain in~$q_i$ or move to~$q_0$.
		Furthermore, from~$q_0$ there are no transitions on the letter~$b$. This allows to conclude that
		the behavior does not change when one replaces an occurrence of~$b$ in a string with a sequence of more than one~$b$.
		Hence, $\delta_n(q_0,w_{S,j})=\delta_n(q_0,w_S)=S$.
		
		\item\emph{For $i=0,\ldots,n-1$ and $x\geq 0$, $\delta_n(q_i,a^x)=\{q_{i'}\}$ where~$i'=0$
		if and only if~$x \bmod n = n - i$. Hence $a^x$ is accepted by some computation path
		starting from~$q_i$ if and only if~$x \bmod n = n - i$.}\\
		It is enough to observe that all the transitions on the letter~$a$ are deterministic
		and form a loop visiting all the states. More precisely~$i'=(i+x)\bmod n$. Hence, $i'=0$
		if and only if~$x \bmod n = n - i$.
	\end{enumerate}
	We now prove that all the strings~$w_{S,j}$ are pairwise distinguishable.
	To this aim, let us consider two such strings~$w_{S,j}$ and~$w_{T,h}$, with $(S,j)\neq(T,h)$.
	We inspect the following two cases:
	\begin{itemize}
		\item $S\neq T$.
		Without loss of generality, let us consider a state~$q_s\in S\setminus T$. We 
		take~$z=a^{f(n-s,\ell-j)}$. By the previous claims, we obtain that~$w_{S,j}\cdot z\in L_{MF_n}$,
		while~$w_{T,h}\cdot z\notin L_{MF_n}$. Furthermore,~$|w_{S,j}\cdot z|\bmod\ell=(j + \ell-j)\bmod\ell = 0$.
		Hence~$w_{S,j}\cdot z\in(\{a,b\}^{\ell})^*$. This allows to conclude that~$w_{S,j}\cdot z\in{\cal L}_{n,\ell}$,
		while~$w_{T,h}\cdot z\notin{\cal L}_{n,\ell}$.		
		\item $j\neq h$.
		We choose a state~$q_s\in S$ and, again, the string~$z=a^{f(n-s,\ell-j)}$.
		Exactly as in the previous case we obtain~$w_{S,j}\cdot z\in{\cal L}_{n,\ell}$.
		Furthermore, being~$j\neq h$ and~$0\leq j,h<\ell$, we get that~$|w_{T,h}\cdot z|\bmod\ell=(h + \ell-j)\bmod\ell \neq 0$.
		Hence~$w_{T,h}\cdot z\notin(\{a,b\}^{\ell})^*$, thus implying~$w_{T,h}\cdot z\notin{\cal L}_{n,\ell}$.	
	\end{itemize}
	By summarizing, we have proved that all the above defined~$(2^n-1)\cdot\ell$ strings~$w_{S,j}$ are pairwise distinguishable.
	We also observe that each string starting with the letter~$b$ is not accepted by the automaton~$A_{MF_n}$.\footnote{%
	We point out that two strings that in~$A_{MF_n}$ lead to the emptyset are not distinguishable. This is the
	reason why we did not considered strings of the form~$w_{\emptyset,j}$ in~(\ref{eq:mf-modified}).
	}
	This implies that the string~$b$ and each string~$w_{S,j}$ are distinguishable.
	Hence, we are able to conclude that each \owdfa accepting~${\cal L}_{n,\ell}$ has at least~$(2^n-1)\cdot\ell+1$ states.
\qed	
\end{proof}

Concerning \ownfas, we prove the following:

\begin{theorem}
\label{th:Lnl-lowerboundNFA}
	Given two integers~$n,\ell>0$ with~$\gcd(n,\ell)=1$, each \ownfa accepting~${\cal L}_{n,\ell}$ 
	must have at least~$n\cdot\ell$ states.
\end{theorem}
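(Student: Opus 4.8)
The plan is to prove the bound by exhibiting a fooling set for ${\cal L}_{n,\ell}$ of cardinality $n\cdot\ell$; by the standard fooling-set lower bound technique for nondeterministic automata (see, e.g., \cite{Sha08}), every \ownfa accepting ${\cal L}_{n,\ell}$ must then have at least $n\cdot\ell$ states. The key observation I would exploit is that, unlike the \owdfa case of \cref{th:Lnl-lowerbound}, here it suffices to work entirely inside the \emph{unary} sublanguage ${\cal L}_{n,\ell}\cap a^*$, so none of the subset-distinguishing machinery involving the letter $b$ is needed.

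First I would determine exactly which unary strings belong to ${\cal L}_{n,\ell}$. Reading $a^m$ from the initial state $q_0$ of $A_{MF_n}$ (\cref{fig:mf}) is deterministic and follows the $a$-cycle, so it reaches $q_{m\bmod n}$; since $q_0$ is the only accepting state, $a^m\in L_{MF_n}$ if and only if $n\mid m$. Combined with the length condition $\ell\mid m$ coming from $(\{a,b\}^{\ell})^*$, we get that $a^m\in{\cal L}_{n,\ell}$ iff $m$ is a common multiple of $n$ and $\ell$. Here the hypothesis $\gcd(n,\ell)=1$ enters decisively: it gives $\lcm(n,\ell)=n\ell$, so that ${\cal L}_{n,\ell}\cap a^*=\{a^m\mid n\ell\mid m\}$, a unary cyclic language of period exactly $n\ell$.

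I would then take the fooling set $P=\{(a^i,a^{n\ell-i})\mid 0\le i<n\ell\}$, which has $n\ell$ pairs. For each $i$ the concatenation $a^i\cdot a^{n\ell-i}=a^{n\ell}$ lies in ${\cal L}_{n,\ell}$, establishing the first fooling-set condition. For the second, given $i\neq j$ the cross product $a^i\cdot a^{n\ell-j}=a^{n\ell+i-j}$ belongs to ${\cal L}_{n,\ell}$ only if $n\ell\mid(i-j)$, which is impossible because $0<|i-j|<n\ell$; hence both cross products fall outside ${\cal L}_{n,\ell}$. Thus $P$ is a (strong) fooling set of size $n\ell$, and the claimed lower bound follows at once.

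The proof itself is short, and the only real subtlety is conceptual rather than computational: one must notice that the whole bound can be read off from the unary cycle and that $\gcd(n,\ell)=1$ is precisely what prevents this period from collapsing to $\lcm(n,\ell)<n\ell$. If instead $\gcd(n,\ell)=d>1$, the same argument would yield only $n\ell/d$ states, consistent with the upper bound of \cref{th:Lnl-upperbound1} being matched only in the coprime case. Beyond invoking the fooling-set lemma and checking the two divisibility conditions above, I do not expect any genuine obstacle.
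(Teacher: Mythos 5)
Your proposal is correct and uses exactly the same fooling set as the paper, namely $X=\{(a^i,a^{n\cdot\ell-i})\;\mid\; 0\le i<n\cdot\ell\}$; the paper merely asserts that $X$ is a fooling set and cites the technique, while you additionally spell out the verification (unary strings in ${\cal L}_{n,\ell}$ are exactly those of length divisible by $n\cdot\ell$, using $\gcd(n,\ell)=1$), which is the intended argument.
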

\begin{proof}
	The proof can be easily given by observing that~$X=\{(a^i,a^{n\cdot\ell-i}) \;\mid\; i=0,\ldots,n\cdot\ell-1 \}$ 
	is a \emph{fooling set} for~${\cal L}_{n,\ell}$~\cite{Bir92}. Hence, the number of states of each \ownfa\
	for~${\cal L}_{n,\ell}$ cannot be lower than the cardinality of~$X$.
\qed
\end{proof}

As a consequence of \cref{th:Lnl-lowerbound,th:Lnl-lowerboundNFA} we obtain:

\begin{theorem}
\label{th:LnF(n)-lowerbound}
	For each prime~$n>4$, every \owdfa and every \ownfa accepting~${\cal L}_{n,F(n)}$ needs~$(2^n-1)\cdot F(n)+1$ 
	and~$n\cdot F(n)$ states, respectively.
\end{theorem}
\begin{proof}
	First, we prove that~$\gcd(n,F(n))=1$ for each prime~$n>4$.
	To this aim, we observe that by definition $F(n)\geq 2\cdot(n-2)$ for each prime~$n$. 
	Furthermore, if~$n>4$ then~$2\cdot(n-2)>n$.
	Hence~$F(n)>n$ for each prime~$n>4$.
	Suppose that~$\gcd(n,F(n))\neq 1$. Then~$n$, being prime and less than~$F(n)$, should divide~$F(n)$.
	By definition of~$F(n)$, this
	would imply~$F(n)=n$; a contradiction. 
	This allows us to conclude that~$\gcd(n,F(n))=1$, for each prime~$n>4$.

	Using \cref{th:Lnl-lowerbound,th:Lnl-lowerboundNFA},
	we get that,
	for all such~$n$'s,
	a \owdfa needs at least~$(2^n-1)\cdot F(n)+1$ states to accept~${\cal L}_{n,F(n)}$,
	while an equivalent \ownfa needs at least~$n\cdot\ell$ states.
\qed
\end{proof}

As a consequence of \cref{th:LnF(n)-lowerbound}, for infinitely many~$n$, the
\owdfa and \ownfa for the language~${\cal L}_{n,F(n)}$ described in \cref{th:Lnl-upperbound1}
are minimal.

By combining the results in \cref{cor:Lnl-upperbound3,th:LnF(n)-lowerbound},
we obtain that the costs of the simulations of \flas by \ownfas and \owdfas presented in
\cref{th:upperBoundF-LA} are asymptotically optimal:
\begin{corollary}
	For infinitely many integers~$n$ there exists a language which is accepted by a \fla\
	with at most~$n+1$ states and such that all equivalent \owdfas and \ownfas require at least~$(2^n-1)\cdot F(n)+1$
	and~$n\cdot F(n)$ states, respectively. 
\end{corollary}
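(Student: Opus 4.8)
The plan is to take $\mathcal{L}_{n,F(n)}$ as the witness family and to obtain all three bounds simply by juxtaposing two results already proved above, so the argument reduces to selecting the right range of the parameter~$n$. First I would restrict attention to the primes $n>4$. Since there are infinitely many such primes, this choice immediately realises the ``infinitely many integers~$n$'' quantifier in the statement, and for each such~$n$ the language in question is exactly $\mathcal{L}_{n,F(n)}$.

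For the upper bound I would invoke \cref{cor:Lnl-upperbound3}, which already guarantees that $\mathcal{L}_{n,F(n)}$ is accepted by a \fla with at most $n+1$ states, for every $n>0$ and hence in particular for the chosen primes. For the two lower bounds I would invoke \cref{thÑLnF(n)-lowerbound}: it asserts precisely that, for every prime $n>4$, each \owdfa accepting $\mathcal{L}_{n,F(n)}$ needs at least $(2^n-1)\cdot F(n)+1$ states and each \ownfa needs at least $n\cdot F(n)$ states. These are exactly the three quantities named in the corollary, and they refer to the same language and the same values of~$n$ as the upper bound.

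Because of this alignment, the statement follows by placing the two estimates side by side; the only thing to verify is that the hypotheses of the cited results are simultaneously satisfiable, and they are, with the primality requirement $n>4$ of \cref{thÑLnF(n)-lowerbound} being the binding constraint (the upper bound of \cref{cor:Lnl-upperbound3} imposes nothing beyond $n>0$). I do not expect any genuine obstacle here: the substantive work was already carried out in establishing $\gcd(n,F(n))=1$ for primes $n>4$ and in the distinguishability and fooling-set arguments underlying \cref{thÑLnF(n)-lowerbound}. Thus the restriction to primes is not cosmetic but is exactly what licenses the lower bounds, and no further computation is needed to assemble the corollary.
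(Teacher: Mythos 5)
Your proposal is correct and matches the paper's own (implicit) argument exactly: the paper derives this corollary by combining \cref{cor:Lnl-upperbound3} (the $n+1$-state \fla upper bound for~${\cal L}_{n,F(n)}$) with \cref{thÑLnF(n)-lowerbound} (the \owdfa and \ownfa lower bounds for primes~$n>4$), with the infinitude of primes supplying the ``infinitely many~$n$'' quantifier. Your observation that the primality restriction is the binding constraint, inherited from the $\gcd(n,F(n))=1$ argument, is precisely the point.
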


\section{Deterministic Forgetting $1$-Limited Automata vs.\ One-Way Automata}
\label{sec:DETtoOneWay}

In \cref{sec:toOneWay} we studied the size costs of the conversions from \flas to one-way finite automata.
We now restrict our attention to the simulation of deterministic machines.
By adapting to this case the arguments used to prove \cref{th:upperBoundF-LA},
we obtain a superpolynomial state bound for the conversion into \owdfas,
which is not so far from the bound obtained starting from nondeterminstic machines:

\begin{theorem}
\label{th:upperBoundDF-LA}
  Let~$M$ be an~$n$-state \dfla.
  Then $M$ can be simulated by a \owdfa with at most~$n\cdot(n+F(n))+2$ states.
\end{theorem}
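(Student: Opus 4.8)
The plan is to follow the same transition-table strategy used in the proof of \cref{th:upperBoundF-LA}, but to exploit determinism to collapse the second component of the simulating automaton's state from a \emph{set} of states down to a \emph{single} state. First I would recall the key structural fact established in the proof of \cref{th:upperBoundF-LA}: since a \fla forgets, each input prefix is rewritten in a unique way, so for each length~$m$ there is exactly one transition table~$\tau_m$ associated with a frozen segment of length~$m$, and the sequence $\tau_0,\tau_1,\ldots$ is ultimately periodic with $\tau_m=\tau_{m+\ell}$ for $m>5n^2$, where $\ell\mid F(n)$. The number of distinct tables arising in the simulation is therefore at most $5n^2+F(n)$, exactly as before; this part of the argument is unchanged by determinism and I would simply invoke it.

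The improvement comes from the second component. When~$M$ is deterministic, at every step there is at most one computation path, so when the simulated head enters cell~$m+1$ for the first time it does so in a \emph{single} uniquely determined state of~$Q$ (or the computation has already halted). Hence the simulating \owdfa need only carry a pair consisting of the current transition table~$\tau_m$ and the single state~$q\in Q$ in which cell~$m+1$ is entered, rather than a subset of~$Q$ as in the nondeterministic-to-deterministic conversion. This already bounds the number of states by $n\cdot(5n^2+F(n))$ plus a sink state, matching the nondeterministic \ownfa bound rather than the exponential $(2^n-1)$ factor. The main work is then to sharpen the count on the number of relevant transition tables from $5n^2+F(n)$ down to $n+F(n)$, which I expect to be the crux of the argument.

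To get the sharper $n+F(n)$ bound I would re-examine where the $5n^2$ threshold came from. In the general \fla analysis the threshold $5n^2$ arose from the need to pump and unpump arbitrary traversals of frozen segments, invoking \cite[Thm.~3.5]{MP01}. In the deterministic setting the behaviors on frozen segments are those of a deterministic two-way device, so the U-turn and traversal computations are deterministic; a deterministic traversal (or left U-turn) that does not repeat a configuration can visit a frozen segment for at most~$n$ steps before looping, so the relevant ``short'' computations and the onset of periodicity occur after only~$n$ cells rather than $5n^2$. I would therefore redo the case analysis of computation paths~$C$ on a frozen segment, observing that in the deterministic case the pumping constants and the preperiod both shrink to~$n$, yielding $\tau_m=\tau_{m+\ell}$ for all $m>n$ with $\ell\mid F(n)$, hence at most $n+F(n)$ distinct tables. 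This is the step I expect to require the most care, since it means revisiting the lemmas of \cite{MP01} under the determinism hypothesis and confirming that the weaker threshold genuinely holds.

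Finally I would assemble the pieces: the simulating \owdfa has states that are pairs (table, entering state), giving $n\cdot(n+F(n))$ states, together with one dedicated sink state to absorb the case in which~$M$'s computation has halted or rejected, and one dedicated accepting-check state for handling the right end-marker, for a total of at most $n\cdot(n+F(n))+2$ states. The correctness of the simulation is inherited verbatim from \cref{th:upperBoundF-LA}: the table component correctly records the behavior of~$M$ on the frozen left portion of the tape, and the single-state component correctly tracks the unique continuation of the deterministic computation into the unread suffix.
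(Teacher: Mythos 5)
Your proposal follows essentially the same route as the paper's proof: exploit determinism so that each cell is entered for the first time by at most one computation, which collapses the second component of the simulating automaton to a single state (yielding a possibly partial \owdfa), and then argue that determinism together with the unary content of the frozen part shrinks the preperiod of the table sequence from $5n^2$ to $n$ (U-turns visit at most $n$ cells; traversals repeat a loop after $n$ cells), the period still being bounded by $F(n)$. Both of these are exactly the paper's two ideas, so in substance the proof is the intended one.

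One bookkeeping step does not go through as you state it. From $\tau_m=\tau_{m+\ell}$ for all $m>n$ with $\ell\le F(n)$ you get at most $n+1+F(n)$ distinct tables (indices $0,\dots,n$, plus at most $F(n)$ for one period), not $n+F(n)$; counting pairs naively then gives $n\cdot(n+F(n))+n$ states, which overshoots the claimed bound. To land on $n\cdot(n+F(n))+2$ you need the further observation, already made in the proof of \cref{th:upperBoundF-LA}, that $\tau_0$ is the table of the empty segment and hence can occur only in the initial state of the simulation, paired with the initial state of~$M$ (unless $\tau_0$ recurs later, in which case the table count itself drops by one); this yields at most $n\cdot(n+F(n))+1$ states for the partial \owdfa, and one added sink state completes it. Also, your dedicated ``accepting-check'' state is unnecessary: whether~$M$ accepts after the head reaches the right end-marker is determined by the current pair (table, state), so the final states are simply a subset of the existing pairs.
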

\begin{proof}
	We can apply the construction given in the proof of \cref{th:upperBoundF-LA} to build,
	from the given \dfla~$M$,
	a one-way finite automaton that, when the head reaches the tape cell~$m+1$, has in its finite control
	the transition table~$\tau_m$ associated with the tape segment of length~$m$ and the state in which the
	cell is reached for the first time. Since the transitions of~$M$ are deterministic,
	each tape cell is reached for the first time by at most one computation and the resulting
	automaton is a (possible partial) \owdfa, with no more than~$n\cdot (5n^2+F(n))+1$ states.
	However, in this case the number of transition tables can be reduced, so decreasing the upper bound.
	In particular, due to determinism and the unary content in the frozen part, 
	we can observe that left and right U-turns cannot visit more than~$n$ tape
	cells. %
	Furthermore, after visiting more than~$n$ tape cells, a traversal is repeating a loop.
	This allows to show that the sequence of transition matrices starts to be periodic after
	the matrix~$\tau_n$,
	i.e, for~$m,m'>n$, if~$m\equiv m'\pmod{ F(n)}$ then~$\tau_m=\tau_{m'}$. 
	Hence, the number of different transition tables used during the simulation is at most~$n+1+F(n)$,
	and the number of states of the simulating  (possibly partial) \owdfa is bounded by~$n\cdot(n+F(n))+1$.
	By adding one more state we finally obtain a complete \owdfa.
\qed
\end{proof}

\subsection*{Optimality: The Language~${\cal J}_{n,\ell}$}

We now present a family of languages for which we prove a size gap very close
to the upper bound in \cref{th:upperBoundDF-LA}.
Given two integers~$n,\ell>0$, let us consider:
\[
	{\cal J}_{n,\ell} = \{ w\in\{a,b\}^* \;\mid\; |w|_a\bmod n = 0 \mbox{ and } |w| \bmod \ell = 0 \}\,.
\]
First of all, we observe that it is not difficult to recognize ${\cal J}_{n,\ell}$ using a \owdfa with~$n\cdot\ell$ states that
counts the number of~$a$'s using one counter modulo~$n$,
and the input length using one counter modulo~$\ell$.
This number of states cannot be reduced,
even allowing nondeterministic transitions:

\begin{theorem}
\label{th:lbJ}
	Each \ownfa accepting ${\cal J}_{n,\ell}$ has at least~$n\cdot\ell$ states.
\end{theorem}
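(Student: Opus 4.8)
The plan is to establish the lower bound $n\cdot\ell$ for any \ownfa accepting ${\cal J}_{n,\ell}$ by exhibiting a \emph{fooling set} of that cardinality, exactly as was done for ${\cal L}_{n,\ell}$ in \cref{th:Lnl-lowerboundNFA}. Recall that a fooling set for a language $L$ is a set of pairs $\{(x_i,y_i)\}_{i}$ such that $x_iy_i\in L$ for all $i$, while for every $i\neq j$ at least one of $x_iy_j$, $x_jy_i$ is not in $L$; the existence of such a set of size $N$ forces any \ownfa for $L$ to have at least $N$ states~\cite{Bir92}.

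First I would try the naive analogue $X=\{(a^i,a^{n\ell-i})\mid 0\le i<n\ell\}$, but this cannot work: the membership condition for ${\cal J}_{n,\ell}$ constrains both $|w|_a\bmod n$ and $|w|\bmod\ell$ independently, and a purely unary word $a^{n\ell}$ has $|w|_a=|w|=n\ell$, so the two counters are not decoupled. Indeed ${\cal J}_{n,\ell}$ is not accepted by counting a single parameter, so the fooling set must separately witness the $n$ possible residues of the $a$-count and the $\ell$ possible residues of the total length. The natural candidate is therefore a two-dimensional family indexed by $(i,j)$ with $0\le i<n$ and $0\le j<\ell$, giving $n\cdot\ell$ pairs. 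The most symmetric choice is to let the first components carry the information about both residues and the second components complete each to a word in ${\cal J}_{n,\ell}$; concretely I would set $x_{i,j}=a^{i}b^{\,j'}$ where $j'$ is chosen so that $|x_{i,j}|\bmod\ell=j$ and $|x_{i,j}|_a\bmod n=i$, and take $y_{i,j}$ to be a matching padding $a^{n-i}b^{\,\cdots}$ (possibly adjusted with blocks of $b$) so that $x_{i,j}y_{i,j}$ has $a$-count divisible by $n$ and total length divisible by $\ell$.

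The key steps, in order, are: (1) define the words $x_{i,j}$ and $y_{i,j}$ precisely so that $x_{i,j}y_{i,j}\in{\cal J}_{n,\ell}$ for every $(i,j)$, by a direct length and $a$-count computation modulo $n$ and modulo $\ell$; and (2) verify the fooling property, namely that if $(i,j)\neq(i',j')$ then at least one of the cross-concatenations fails membership. For step (2) I expect two subcases mirroring \cref{th:Lnl-lowerbound}: if $i\neq i'$ then the $a$-count of the cross-word $x_{i,j}y_{i',j'}$ is $\not\equiv 0\pmod n$, so it is rejected; if $i=i'$ but $j\neq j'$ then the total length is $\not\equiv 0\pmod\ell$, again rejected. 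Since the two moduli $n$ and $\ell$ are controlled separately by the free letters $a$ and $b$, no coprimality hypothesis is needed here, unlike in the $\gcd(n,\ell)=1$ situation of the ${\cal L}_{n,\ell}$ bounds.

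The main obstacle will be arranging the second-component paddings so that \emph{both} constraints are satisfied simultaneously for every diagonal pair while still breaking exactly one constraint for each off-diagonal cross pair. The delicate point is to ensure that the $a$-residue test and the length-residue test do not interfere: I would keep all the $a$'s concentrated in a way that makes the $a$-count of each cross-word transparently equal to a fixed residue depending only on $i+i'$ (or $i$ alone, if $y$ contributes a fixed number of $a$'s), and use blocks of $b$ purely to adjust total length modulo $\ell$ without touching the $a$-count. Once the two quantities are genuinely independent across the $n\ell$ pairs, the fooling-set verification reduces to the two clean modular arguments above, and the bound $n\cdot\ell$ follows immediately.
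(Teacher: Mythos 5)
Your proposal is correct and takes essentially the same approach as the paper: the paper also exhibits an extended fooling set of $n\cdot\ell$ pairs (concretely $x_{ij}=a^jb^{H+i-j}$ and $y_{ij}=b^{H-i-n+j}a^{n-j}$ with $H>\ell+n$ a multiple of $\ell$), in which the $a$'s control the residue modulo $n$, the $b$-padding controls the length modulo $\ell$, and the two cross-pair cases are exactly your case analysis in step (2). Your remark that no coprimality assumption on $n$ and $\ell$ is needed here, unlike for ${\cal L}_{n,\ell}$, also matches the paper.
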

\begin{proof}
	Let~$H>\ell+n$ be a multiple of~$\ell$. For~$i=1,\ldots,\ell$, $j=0,\ldots,n-1$,
	consider~$x_{ij}=a^jb^{H+i-j}$ and~$y_{ij}=b^{H-i-n+j}a^{n-j}$.
	We are going to prove that the set
	\[
		X = \{ (x_{ij},y_{ij} ) \;\mid\; 1\leq i\leq\ell, 0\leq j < n \}
	\]
	is an \emph{extended fooling set} for~${\cal J}_{n,\ell}$.
	To this aim, let us consider~$i,i'=1,\ldots,\ell$, $j,j'=0,\ldots,n-1$.
	We observe that the string~$x_{ij}y_{ij}$ contains~$n$ $a$'s and has 
	length~$j+H+i-j+H-i-n+j+n-j=2H$ and hence it belongs to~${\cal J}_{n,\ell}$.
	For~$i,i'=1,\ldots,\ell$, if~$i\neq i'$ then the string~$x_{ij}y_{i'j}\notin{\cal J}_{n,\ell}$
	because it has length~$2H+i-i'$ which cannot be a multiple of~$\ell$.
	On the other hand, if~$j<j'$, the string~$x_{ij}y_{i'j'}$ contains~$j+n-j'<n$ many~$a$'s, so it
	cannot belong to~${\cal J}_{n,\ell}$,
\qed
\end{proof}

Concerning the recognition of~${\cal J}_{n,\ell}$ by \flas we prove the following:

\begin{theorem}
\label{th:Jnl-LF}
	Let~$\ell>0$ be an integer that factorizes~$\ell=p_1^{k_1}\cdots p_s^{k_r}$ as product of prime powers, $o=r\bmod 2$,
	and~$n>0$. Then~${\cal J}_{n,\ell}$ is accepted by a sweeping \twdfa with~$n+p_1^{k_1}+ \cdots + p_r^{k_r}+o$ states
	and by a sweeping \dfla with~$\max(n,p_1^{k_1}+ \cdots + p_r^{k_r}+o)$ states.	
\end{theorem}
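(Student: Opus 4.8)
The plan is to construct the two sweeping machines directly, mirroring the strategy already used in \cref{th:Lnl-upperbound2}, but adapting the tests to the definition of~${\cal J}_{n,\ell}$. Recall that~${\cal J}_{n,\ell}$ requires two independent conditions: that the number of~$a$'s is a multiple of~$n$, and that the total length is a multiple of~$\ell$. Since~$\ell=p_1^{k_1}\cdots p_r^{k_r}$, the length condition splits into~$r$ independent divisibility checks, one for each prime power~$p_i^{k_i}$, and each such check can be performed by a single sweep using a counter modulo~$p_i^{k_i}$, costing~$p_i^{k_i}$ states. The~$a$-counting condition is a modulo-$n$ count that likewise can be done in one sweep using~$n$ states.

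For the sweeping \twdfa I would allot one sweep to the~$a$-count (using~$n$ states) and one sweep to each prime-power divisibility test (using~$p_i^{k_i}$ states for the $i$-th), alternating directions between consecutive sweeps. The total number of sweeps is~$r+1$; when~$r+1$ is even (i.e. $r$ odd) the last sweep ends on the right end-marker and the machine moves directly to acceptance, whereas when~$r$ is even an extra traversal is needed to reach the accepting configuration, which accounts for the~$+o$ term. Summing the state costs gives~$n+p_1^{k_1}+\cdots+p_r^{k_r}+o$ states. The key point making the \dfla bound smaller is that, exactly as in the proof of \cref{th:Lnl-upperbound2}, once the first sweep has been completed every input symbol has been overwritten by~$Z$, so in all subsequent sweeps the transition function no longer distinguishes~$a$ from~$b$ and the states used for the~$a$-count can be reused for the divisibility tests. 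Hence the \dfla needs only~$\max(n,p_1^{k_1}+\cdots+p_r^{k_r}+o)$ states: the~$a$-count must happen on the original (unforgotten) tape in the first sweep using~$n$ states, while the length tests, which only need to count cells and not read their contents, are carried out on the frozen tape reusing those states and adding extra ones only if~$p_1^{k_1}+\cdots+p_r^{k_r}+o$ exceeds~$n$.

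The main subtlety, and the step I expect to require the most care, is scheduling which condition is tested in the first (left-to-right) sweep. For the \dfla the~$a$-count \emph{must} be the first sweep, because this is the only moment the machine still sees the original symbols; after that the tape is all~$Z$'s and~$a$'s are indistinguishable from~$b$'s, so the length-divisibility checks must be relegated to later sweeps. One must verify that counting~$a$'s modulo~$n$ and simultaneously (or subsequently) counting total length is consistent with the sweep-reuse accounting and with the parity bookkeeping that produces the~$o$ term. Once this scheduling is fixed, correctness is immediate: the machine accepts precisely when all~$r+1$ tests succeed, which is exactly the conjunction defining~${\cal J}_{n,\ell}$, and determinism holds throughout since both the modular counters and the sweep-direction control are deterministic.
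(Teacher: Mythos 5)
Your construction is essentially the same as the paper's own proof: a first left-to-right sweep counting the $a$'s modulo $n$ (using $n$ states), then $r$ alternating sweeps checking divisibility of the input length by each $p_i^{k_i}$, and, for the \dfla, reuse of the first sweep's $n$ states in the later sweeps because after the first sweep every tape cell holds $Z$, so the transition function can behave differently on $Z$ than on $a,b$. The one flaw is that your parity bookkeeping for the $o$ term is inverted: since the first sweep is left-to-right, the $k$-th sweep ends at the right end-marker exactly when $k$ is odd, so the last of the $r+1$ sweeps ends at the right end-marker when $r$ is \emph{even} (the machine can then pass the right end-marker and accept, consistently with $o=0$), while the extra one-state traversal is needed when $r$ is \emph{odd} (consistently with $o=r\bmod 2=1$). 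As written---extra traversal needed when $r$ is even---your accounting would require $n+p_1^{k_1}+\cdots+p_r^{k_r}+1$ states precisely when $o=0$, contradicting the claimed bound; swapping the two cases fixes the argument.
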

\begin{proof}
	A \twdfa can make a first sweep of the input, using~$n$ states, to check if the number of~$a$'s in the input is a multiple of~$n$.
	Then, in further~$r$ sweeps, alternating right-to-left with left-to-right sweeps, it can check the divisibility of the
	input length by~$p_i^{k_i}$, $i=1,\dots,r$. If~$r$ is odd this process ends with the head on the left end-marker.
	Hence, in this case,
	when all tests are positive,
	a further sweep (made by using a unique state) is used to move the head from the left
	to the right end-marker and then reach the accepting configuration.
	
	We can implement a \dfla that uses the same strategy.
	However, after the first sweep, all input symbols are  replaced
	by~$Z$. Hence, as in the proof of \cref{th:Lnl-upperbound2}, the machine can reuse the~$n$ states
	of the first sweep. So, the total number of states reduces to~$\max(n,p_1^{k_1}+  \cdots + p_r^{k_r}+o)$.
\qed
\end{proof}

As a consequence of \cref{th:Jnl-LF}, we obtain:

\begin{corollary}
\label{cor:Jnl-LF}
	For each integer~$n>0$ the language~${\cal J}_{n,F(n)}$ is accepted by a \dfla with at most~$n+1$ states.
\end{corollary}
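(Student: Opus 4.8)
The plan is to specialize \cref{th:Jnl-LF} to the case $\ell = F(n)$, exactly as \cref{cor:Lnl-upperbound3} specializes \cref{th:Lnl-upperbound2}. The one ingredient beyond \cref{th:Jnl-LF} is the number-theoretic characterization of the Landau function recalled in \cref{sec:prel}, namely that $F(n)$ equals the maximum product $p_1^{k_1}\cdots p_r^{k_r}$ of prime powers subject to the constraint $p_1^{k_1}+\cdots+p_r^{k_r}\leq n$.

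First I would fix the factorization $F(n)=p_1^{k_1}\cdots p_r^{k_r}$ into prime powers. Since this product attains the maximum in the characterization above, the very factorization realizing $F(n)$ must satisfy the constraint defining that maximum, that is, $p_1^{k_1}+\cdots+p_r^{k_r}\leq n$. This is the crucial inequality: although $F(n)$ itself is superpolynomial in $n$, the sum of the prime powers appearing in its factorization never exceeds $n$.

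Next I would invoke \cref{th:Jnl-LF} with this $\ell=F(n)$. It yields a sweeping \dfla for ${\cal J}_{n,F(n)}$ with $\max(n,\,p_1^{k_1}+\cdots+p_r^{k_r}+o)$ states, where $o=r\bmod 2$. Combining the displayed inequality with $o\leq 1$, the second argument of the maximum is bounded by $n+1$, while the first argument $n$ is trivially at most $n+1$; hence the state count is at most $n+1$, which is the claimed bound.

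I do not expect any real obstacle, as the argument is a direct instantiation mirroring \cref{cor:Lnl-upperbound3}. The only point requiring care is to appeal to the product-of-prime-powers form of $F(n)$, in which the sum of the prime powers is bounded by $n$, rather than to the original definition of the Landau function, in which arbitrary summands are constrained to sum \emph{exactly} to $n$; conflating the two would give the wrong constraint on $p_1^{k_1}+\cdots+p_r^{k_r}$ and break the final estimate.
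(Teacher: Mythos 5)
Your proposal is correct and takes essentially the same route as the paper: the corollary is obtained by instantiating \cref{th:Jnl-LF} with $\ell=F(n)$ and invoking the product-of-prime-powers characterization of Landau's function to get $p_1^{k_1}+\cdots+p_r^{k_r}\leq n$, hence $\max(n,\,p_1^{k_1}+\cdots+p_r^{k_r}+o)\leq n+1$, exactly as the paper does (implicitly, mirroring the proof of \cref{cor:Lnl-upperbound3}). Your closing caveat about using the sum-bounded-by-$n$ characterization rather than the exact-sum definition of $F(n)$ is precisely the right point of care, and it is the same one the paper relies on.
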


By combining the upper bound in \cref{cor:Jnl-LF} with the lower bound in \cref{th:lbJ},
we obtain that the superpolynomial cost of the simulation of \dflas by \owdfas given in \cref{th:upperBoundDF-LA} is asymptotically optimal and
it cannot be reduced even if the resulting automaton is nondeterministic:
\begin{corollary}
	For each integer~$n>0$ there exists a language accepted by a \dfla\
	with at most~$n+1$ states and such that all equivalent \owdfas and \ownfas require at least~$n\cdot F(n)$
	states. 
\end{corollary}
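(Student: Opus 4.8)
The plan is to instantiate the language family~${\cal J}_{n,\ell}$ at the specific value~$\ell=F(n)$, so that the witness is the single language~${\cal J}_{n,F(n)}$, and then to read off the two required bounds directly from the results already established for this family. No new construction or combinatorial argument is needed: the corollary is purely a matter of combining \cref{cor:Jnl-LF} with \cref{th:lbJ} at the right parameter.

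For the upper bound I would simply invoke \cref{cor:Jnl-LF}, which asserts precisely that~${\cal J}_{n,F(n)}$ is accepted by a \dfla with at most~$n+1$ states. Nothing further is required here, since that corollary was itself obtained from the sweeping \dfla construction of \cref{th:Jnl-LF}, using the inequality~$p_1^{k_1}+\cdots+p_r^{k_r}\le n$ that holds for the prime-power factorization of~$F(n)$ by the very definition of the Landau function.

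For the lower bound I would apply \cref{th:lbJ} with~$\ell=F(n)$, which states that every \ownfa accepting~${\cal J}_{n,F(n)}$ needs at least~$n\cdot F(n)$ states. The observation that lets a single argument cover both target models is that a \owdfa is a special case of a \ownfa, so the \ownfa lower bound transfers verbatim to \owdfas, and there is no need to re-run the extended-fooling-set argument separately for the deterministic case.

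The entire content of the corollary is therefore the choice~$\ell=F(n)$: since~$F(n)$ is superpolynomial in~$n$ whereas the \dfla uses only~$n+1$ states, the resulting gap between~$n\cdot F(n)$ and~$n+1$ is superpolynomial, matching the cost of the simulation in \cref{th:upperBoundDF-LA}. I do not expect any genuine obstacle, as both halves are immediate. The only point deserving a moment of care is confirming that \cref{th:lbJ} applies for arbitrary~$\ell$, and in particular for~$\ell=F(n)$; it does, since its extended fooling set~$X$ is defined for every pair~$n,\ell>0$ with no coprimality or other side condition — this is exactly why this family, rather than~${\cal L}_{n,\ell}$, is used to witness the deterministic case.
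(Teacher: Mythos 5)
Your proposal is correct and is exactly the paper's argument: the corollary is obtained by instantiating the witness language as~${\cal J}_{n,F(n)}$, taking the upper bound from \cref{cor:Jnl-LF} and the lower bound from \cref{th:lbJ} with~$\ell=F(n)$, the \owdfa case being subsumed by the \ownfa bound. Your side remark is also on point: \cref{th:lbJ} imposes no coprimality condition, which is precisely what allows the statement to hold for every~$n>0$ rather than only for infinitely many~$n$ as with~${\cal L}_{n,\ell}$.
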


\section{Forgetting $1$-Limited vs.\ Two-Way Automata}
\label{sec:FvsTW}

Up to now, we have studied the size costs of the transformations of \flas and \dflas into one-way automata.
We proved that they cannot be significantly reduced, by providing suitable witness languages.
However, we can notice that such languages are accepted by two-way automata whose sizes are not so far from
the sizes of \flas and \dflas we gave.
So we now analyze the size relationships between forgetting and two-way automata.
On the one hand,
we show that forgetting input symbols can dramatically reduce the descriptional power.
Indeed, we provide a family of languages for which \flas are exponentially larger than \twdfas.
On the other hand, we guess that also in the opposite direction at least a superpolynomial gap can be possible.
To this aim we present a language accepted by a \dfla of size~$O(n)$ and we conjuncture that each \twnfa accepting
it requires more than~$F(n)$ states.

\subsection*{From Two-way to Forgetting $1$-Limited Automata}

For each integer~$n>0$, let us consider the following language
\[
	\mathcal E_n=\{w\in\{a,b\}^* \;\mid\; \exists x\in\{a,b\}^n, \exists y,z\in\{a,b\}^*: w=x\cdot y = z\cdot x^R\}\text,
\]
i.e., the set of strings in which the prefix of length~$n$ is equal to the reversal of the suffix.
As we shall see,
it is possible to obtain a \twdfa with~$\bigoof{n}$ states accepting it.
Furthermore, each equivalent \fla requires~$2^n$ states.

To achieve this result, first we give a lower bound technique for the number of states of \flas, 
which is inspired by the \emph{fooling set technique} for \ownfas~\cite{Bir92}.

\begin{lemma}
\label{lemma:fooling}
	Let~$L\subseteq\Sigma^*$ be a language and~$X=\{(x_i,y_i)\;\mid\; i=1,\ldots,n\}$ be a set of words
	such that the following hold:
	\begin{itemize}
		\item $|x_1|=|x_2|=\cdots=|x_n|$,
		\item $x_iy_i\in L$, for $i=1,\ldots,n$,
		\item $x_iy_j\notin L$ or~$x_jy_i\notin L$, for $i,j=1,\ldots,n$ with~$i\neq j$.
	\end{itemize}
	Then each \flas accepting~$L$ has at least~$n$ states.
\end{lemma}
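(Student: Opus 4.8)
The plan is to mimic the one-way \nfa fooling-set argument, but with the associated ``state'' read off at the boundary between the common-length prefix and the suffix. Write $k=|x_1|=\cdots=|x_n|$, so that in every input $x_iy_i$ the prefix $x_i$ occupies cells $1,\ldots,k$ and the suffix $y_i$ occupies cells $k+1,\ldots$. Fix an arbitrary \fla $M$ accepting $L$, and for each $i$ fix one accepting computation $\mathcal C_i$ on $x_iy_i$ (nondeterminism is harmless: we simply select a single accepting path). Since acceptance requires the head to pass the right end-marker, which sits at position $|x_iy_i|+1\ge k+1$, the head must at some point move from cell $k$ to cell $k+1$; let $s_i$ be the state in which this first crossing of the boundary occurs. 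I would then show that the map $i\mapsto s_i$ is injective, which yields at least $n$ distinct states.

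The crucial observation — and the main point to get right — is that the left part of the tape is fully forgotten before the first crossing, uniformly in $i$. Indeed, the head starts on cell $1$, moves one cell at a time, and cannot pass \lend, so to reach cell $k+1$ for the first time it must have visited each of the cells $1,\ldots,k$ at least once. By the forgetting property, each of those cells now contains the fixed symbol~$Z$, so at the instant of the first crossing the tape segment in positions $1,\ldots,k$ reads exactly $Z^k$. Because $|x_i|=k$ for all $i$, this content $Z^k$ is the same for every $i$, independent of the actual prefix $x_i$. Thus, at the moment the head first enters cell $k+1$, the configuration of $\mathcal C_i$ is: state $s_i$, head on cell $k+1$, left segment $Z^k$, and right segment equal to $y_i$ still untouched.

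The grafting step then runs as follows. Suppose $s_i=s_j$ for some $i\ne j$; I would build an accepting computation on $x_iy_j$. Run the prefix of $\mathcal C_i$ up to its first crossing: during this phase the head stays within cells $0,\ldots,k$ and never inspects the right part, so this phase is valid on input $x_iy_j$ as well and ends in state $s_i$ on cell $k+1$ with left segment $Z^k$. Now splice on the portion of $\mathcal C_j$ after \emph{its} first crossing, whose starting configuration is state $s_j=s_i$, head on cell $k+1$, left segment $Z^k$, right segment $y_j$ — matching the current configuration exactly (note $|x_iy_j|=k+|y_j|=|x_jy_j|$, so the right end-marker is in the same position). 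Since the left segment is the frozen string $Z^k$ in both computations, any later re-entries of the head into cells $1,\ldots,k$ behave identically, so the spliced tail is a legal accepting computation of $M$ on $x_iy_j$, giving $x_iy_j\in L$. By the symmetric splice (prefix of $\mathcal C_j$ followed by the tail of $\mathcal C_i$) we also get $x_jy_i\in L$, contradicting the third hypothesis. Hence the $s_i$ are pairwise distinct and $M$ has at least $n$ states. The only delicate points to check carefully are the existence and well-definedness of the first crossing, and that two-way re-entries into the now-uniform left region cannot break the splice — both of which follow from the forgetting property together with the equal-length condition on the $x_i$.
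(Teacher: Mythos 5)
Your proof is correct and follows essentially the same approach as the paper: split each accepting computation at the first crossing of the boundary between the length-$k$ prefix and the suffix, observe that the left segment is then the uniform frozen string $Z^k$, and splice computations with equal crossing states to contradict the third hypothesis. Your write-up just makes explicit a couple of details the paper leaves implicit (well-definedness of the first crossing, and that the end-marker position matches in the spliced input).
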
 
\begin{proof}
	Let~$M$ be a \flas accepting~$L$. Let~$C_i$ be an accepting computation of~$M$ on input~$x_iy_i$, $i=1,\ldots,n$.
	We divide~$C_i$ into two parts~$C'_i$ and~$C''_i$, where~$C'_i$ is the part of~$C_i$ that starts from the initial configuration
	and ends when the head reaches for the first  time the first cell to the right of~$x_i$, namely the cell containing the first symbol of~$y_i$,
	while~$C''_i$ is the remaining part of~$C_i$.
	Let~$q_i$ be the state reached at the end of~$C'_i$, namely the state from which~$C''_i$ starts.

	If~$q_i=q_j$, for some~$1\leq i,j\leq n$, then the computation obtained concatenating~$C'_i$ and~$C''_j$ accepts the input~$x_iy_j$.
	Indeed, at the end of~$C'_i$ and of~$C'_j$, the content of the tape to the
	left of the head is replaced by the same string~$Z^{|x_i|}=Z^{|x_j|}$. 
	So~$M$, after inspecting~$x_i$, can perform exactly the same moves as
	on input~$x_jy_j$ after inspecting~$x_j$ and hence it can accept~$x_iy_j$.
	In a similar way, concatenating~$C'_j$ and~$C''_i$ we obtain an accepting
	computation on~$x_jy_i$. If~$i\neq j$, then this is a contradiction.
	
	This allows to conclude that~$n$ different states are necessary for~$M$.
\qed
\end{proof}

We are now able to prove the claimed separation.

\begin{theorem}
\label{th:2DFA->F}
	The language~$\mathcal E_n$ is accepted by a \twdfa with~$\bigoof{n}$ states, while each
	\fla accepting it has at least~$2^n$ states.
\end{theorem}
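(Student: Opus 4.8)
The plan is to prove the two halves of Theorem~\ref{th:2DFA->F} separately: first the upper bound (a \twdfa with $O(n)$ states accepts $E_n$), then the lower bound ($2^n$ states are necessary for any \fla), where the latter is the substantive part and is handled by applying Lemma~\ref{lemma:fooling}.

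For the \twdfa upper bound, the natural construction exploits two-way motion to compare the length-$n$ prefix against the length-$n$ suffix symbol by symbol. The machine first walks to the right end-marker to locate the last cell, then performs $n$ passes: in pass $k$ (for $k=1,\ldots,n$) it remembers the $k$-th input symbol, sweeps to the corresponding position $|w|-n+k$ near the right end, and checks for a match. Remembering one symbol and a counter that ranges over $\{1,\ldots,n\}$ costs $O(n)$ states, and bookkeeping to relocate positions across sweeps stays within $O(n)$; a short check also guarantees $|w|\ge n$. So $O(n)$ states suffice. I would only sketch this, since the $O$-bound hides constants and the details are routine.

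The core is the lower bound, for which I would exhibit a fooling set meeting the three hypotheses of Lemma~\ref{lemma:fooling}. The idea is to index pairs by the length-$n$ strings themselves: for each $x\in\{a,b\}^n$ set $x_i=x$ and choose $y_i$ so that $x\,y_i\in E_n$ but mismatched concatenations fall outside $E_n$. A clean choice is $y_x = x$ (or a short separator followed by $x$), so that $x_i y_i = xx$ has equal length-$n$ prefix and suffix, hence lies in $E_n$. All the first components have length exactly $n$, satisfying the first hypothesis, and $x_i y_i = xx\in E_n$ gives the second. For the third (distinguishability), given distinct $x\neq x'$ I must show $x\,y_{x'}\notin E_n$ or $x'\,y_x\notin E_n$: with $y_x=x$ this means $x x'\notin E_n$ or $x' x\notin E_n$, i.e., the length-$n$ prefix and suffix of $xx'$ (namely $x$ and $x'$) differ, which holds precisely because $x\neq x'$. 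Since there are $2^n$ distinct strings $x$, the set has cardinality $2^n$, and Lemma~\ref{lemma:fooling} yields the bound.

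\textbf{The main obstacle} I anticipate is verifying the third fooling-set condition robustly once a padding scheme is fixed. The subtlety is that $E_n$ membership depends on overlaps of length exactly $n$, so for short witnesses like $y_x=x$ one must be careful that the \emph{only} way the concatenation lands in $E_n$ is through the intended prefix/suffix comparison, not through some accidental coincidence at an internal boundary; choosing the witnesses long enough (or inserting a symbol that cannot appear at the matched positions) removes any ambiguity, but this requires checking that no spurious length-$n$ prefix equals the length-$n$ suffix. Once the witnesses are pinned down so that the prefix and suffix of $x_i y_j$ are exactly $x_i$ and $x_j$, the argument reduces to the trivial observation $x_i\neq x_j$, and the lemma closes the proof.
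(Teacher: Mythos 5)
Your proposal is correct and follows essentially the same route as the paper: the same sweep-based \twdfa construction for the upper bound, and for the lower bound the same fooling set $X=\{(x,x)\mid x\in\{a,b\}^n\}$ fed into Lemma~\ref{lemma:fooling}. Your worry about ``accidental coincidences'' is a non-issue here, since membership of a string $w$ in $E_n$ is determined solely by whether its length-$n$ prefix equals its length-$n$ suffix, so for $x,x'\in\{a,b\}^n$ the concatenation $xx'$ lies in $E_n$ exactly when $x=x'$, just as you conclude.
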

\begin{proof}
	We can build a \twdfa that on input~$w\in\Sigma^*$ tests the equality between the symbols in positions~$i$ and~$|w|-i$ of~$w$,
	for~$i=1,\ldots,n$. If one of the tests fails, then the automaton stops and rejects, otherwise it finally accepts.
	For each~$i$, the test starts with the head on the left end-marker and the value of~$i$ in the finite control.
	Then, the head is moved to the right, while decrementing~$i$, to locate the~$i$th input cell and remember its content in the finite control.
	At this point, the head is moved back to the left end-marked, while counting input cells to restore the value of~$i$.
	The input is completely crossed from left to right, by keeping this value in the control. When the right end-marker is reached,
	a similar procedure is applied to locate the symbol in position~$|w|-i$, which is then compared with that in position~$i$, previously stored in
	the control. If the two symbols are equal, then the head is moved again to the right end-marker, while restoring~$i$.
	If~$i=n$, then the machine moves in the accepting configuration, otherwise the value of~$i$ is incremented and the head is moved to the left 
	end-marker to prepare the next test.
	From the above description we can conclude that~$\bigoof{n}$ states are enough for a \twdfa to accept~$\mathcal E_n$.
	
	For the lower bound, we observe that the set~$X=\{(x,x^R)\;\mid\; x\in\{a,b\}^n\}$,
	whose cardinality is~$2^n$, satisfies the requirements of \cref{lemma:fooling}.
\qed
\end{proof}

\subsection*{From Forgetting $1$-limited to Two-way Automata}

We wonder if there is some language showing an exponential, or at least superpolynomial, 
size gap from \flas to two-way automata.
Here we propose, as a possible candidate, the following language, where~$n,\ell>0$ are integers:
\[
	{\cal H}_{n,\ell} = \{\, ub^nv \;\mid\; u\in(a+b)^*a,\,v\in(a+b)^*,\, |u|_a\bmod n = 0, \mbox{ and~} |u| \bmod \ell = 0 \}\text.
\]
We prove that~${\cal H}_{n,F(n)}$ can be recognized by a \dfla with a number of
states linear in~$n$.

\begin{theorem}
	For each integer~$n>1$ the language~${\cal H}_{n,F(n)}$ is accepted by a \dfla with~$\bigoof{n}$ states.
\end{theorem}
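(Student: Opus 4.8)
The plan is to build a \dfla that scans the input from left to right looking for a position at which the tape can be split as~$ub^nv$ with the required properties, testing the two modular conditions on~$u$ by reusing the prime-power sweep technique of \cref{th:Jnl-LF}. Write $F(n)=p_1^{k_1}\cdots p_r^{k_r}$ as a product of prime powers; as in \cref{cor:Jnl-LF} we have $p_1^{k_1}+\cdots+p_r^{k_r}\le n$, and this is exactly what keeps the state count linear in~$n$. During the forward scan the machine maintains a single counter holding the number of $a$'s read so far modulo~$n$ (using~$n$ states), so that each letter~$a$ that brings this counter back to~$0$ marks a \emph{candidate} end position~$i$ for~$u$, i.e.\ a position with $w_i=a$ and $|w_{\le i}|_a\equiv 0\pmod n$. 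Acceptance of~${\cal H}_{n,F(n)}$ only requires \emph{some} valid candidate, so after confirming one the machine may simply run to the right end-marker and accept.

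At a candidate~$i$ the prefix of length~$i$ has just been overwritten by~$Z$ while the suffix is still intact and the head sits on the rightmost frozen cell. Here it must check that $i\equiv 0\pmod{F(n)}$ and that the next~$n$ cells hold~$b$'s. It cannot keep a length counter modulo~$F(n)$, since that alone would cost~$F(n)$ states; instead, exactly as in \cref{th:Jnl-LF}, it makes one sweep per prime power~$p_t^{k_t}$ across the frozen block, using $p_t^{k_t}$ states and testing whether the block length is $\equiv 0\pmod{p_t^{k_t}}$. Since the sweeps read only cells containing~$Z$ whereas the $a$-counter reads only cells containing $a$ or~$b$, the two families of states can be shared, keeping the total in~$\bigoof{n}$. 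If all $r$ tests succeed then $F(n)\mid i$, and the machine reads the following cells to confirm the block~$b^n$ before accepting.

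The main obstacle is that membership is \emph{existential} over the splitting position: a candidate may pass the divisibility test but fail to be followed by~$b^n$, or vice versa, while a later candidate works, so after rejecting a candidate the machine must resume the forward scan and try the next one \emph{without} having lost the $a$-counter during the intervening back-and-forth sweeps. The crucial observation is that at every candidate the counter value is~$0$, and the cells consumed while testing a candidate are verified to be~$b$'s (the test being abandoned the instant a non-$b$ is met), so on resuming the counter is restored to a known value and never has to be carried through the prime-power sweeps; thus counter states and sweep states live in disjoint phases and no product blow-up occurs. One further delicacy is boundary detection: scanning a block of~$Z$'s rightwards unavoidably steps onto the first intact cell, so these few boundary cells are absorbed into the verification of~$b^n$ (checking each is a~$b$ and shifting the target residues of the later sweeps by the number of cells so consumed), which is harmless since the number of sweeps, hence of cells consumed this way, is at most~$r\le n/2$ and stays within the $b^n$ block. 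Assembling these pieces yields a \dfla with $\bigoof{n}$ states accepting~${\cal H}_{n,F(n)}$.
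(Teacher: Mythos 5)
Your proposal is correct and follows essentially the same route as the paper's own proof: a forward scan counting $a$'s modulo~$n$ to locate candidate prefixes, prime-power sweeps over the frozen segment to test divisibility by~$F(n)$ (with the target residues shifted to account for the boundary cells frozen by left-to-right sweeps, which are absorbed into the~$b^n$ check), and resumption of the scan after a failed candidate using the fact that the $a$-counter is at a known value. The only cosmetic difference is your state-sharing remark between the counter and the sweep states, which the paper does not need (and you do not either, since the unshared total is already~$\bigoof{n}$).
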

\begin{proof}
	A \dfla~$M$ can start to inspect the input from left to right, while counting modulo~$n$
	the~$a$'s.
	In this way it can discover each prefix~$u$ that ends with an~$a$ and such that~$|u|_a\bmod n = 0$.
	When such a prefix is located, $M$ verifies whether~$|u|$ is a multiple of~$F(n)$
	and it is followed by~$b^n$. We will discuss how to do that below.
	If the result of the verification is positive, then~$M$ moves to the accepting configuration,
	otherwise it continues the same process.
	
	Now we explain how the verification can be performed. Suppose~$F(n)=p_1^{k_1}\cdots p_r^{k_r}$, where~$p_1^{k_1},\ldots,p_r^{k_r}$
	are prime powers. First, we point out that when the verification starts,
	exactly the first~$|u|$ tape cells have been rewritten. Hence, the rough idea is to alternate right-to-left
	and left-to-right sweeps on such a portion of the tape, to check the divisibility of~$|u|$ by each~$p_i^{k_i}$,
	$i=1,\ldots,r$.
	A right-to-left sweep stops when the head reaches the left end-marker.
	On the other hand,
	a left-to-right sweep can end only when the head reaches the first cell to the right of the frozen
	segment. This forces the replacement of the symbol in it with the symbol~$Z$, so increasing the length
	of the frozen segment by~$1$. 	
	In the next sweeps, the machine has to take into account how much the frozen segment increased.
	For instance, after checking divisibility by~$p_1^{k_1}$ and by~$p_2^{k_2}$, in the next sweep the
	machine should verify that the length of the frozen segment, modulo~$p_3^{k_3}$, is~$1$. 
	Because the machine has to check~$r$ divisors and right-to-left sweeps alternate with left-to-right
	sweeps, when all~$r$ sweeps are done, exactly~$\lfloor r/2\rfloor$ extra cells to the right of
	the original input prefix~$u$ are frozen. Since $n>r/2$, if the original symbol in all those cells
	was~$b$,
	to complete the verification phase the machine has to check whether
	the next~$n-\lfloor r/2\rfloor$ not yet visited cells contain~$b$.
	However,
	the verification fails
	if a cell containing an~$a$ or the right end-marker is reached during some point of the verification phase.
	This can happen either while checking the length of the frozen segment
	or while checking the last~$n-\lfloor r/2\rfloor$ cells.
	If the right end-marker is reached, then the machine rejects. Otherwise
	it returns to the main procedure, i.e., resumes the counting of the~$a$'s.
	
	The machine uses a counter modulo~$n$ for the~$a$'s. In the verification phase this counter keeps the value~$0$.
	The device first has to count the length of the frozen part modulo~$p_i^{k_i}$, iteratively for~$i=1,\ldots,r$, and to
	verify that the inspected prefix is followed by~$b^n$, using again a counter.
	Since~$p_1^{k_1}+ \cdots + p_r^{k_r}\leq n$, by summing up we conclude that the total number of 
	states is~$\bigoof{n}$.
\qed
\end{proof}

By using a modification of the argument in the proof of \cref{th:lbJ}, we can show
that each \ownfa accepting~${\cal H}_{n,F(n)}$ cannot have less than~$n\cdot F(n)$ states.\footnote{%
It is enough to consider the set~$X'=\{ (x_{ij},y_{ij}b^n ) \;\mid\; 1\leq i\leq\ell, 0\leq j < n \}$,
instead of~$X$.}
We guess that such a number cannot be substantially reduced even having the possibility of moving the head in both directions.
In fact, a two-way automaton using~$\bigoof{n}$ states can easily locate on the input tape a ``candidate''
prefix~$u$. However, it cannot remember in which position of the tape~$u$ ends, in order to check~$|u|$ in several sweeps 
of~$u$. So we do not see how the machine could verify whether~$|u|$ is a multiple of~$F(n)$ using less than~$F(n)$ states.

\section{Conclusion}

We compared the size of forgetting $1$-limited automata with that of finite automata,
proving exponential and superpolynomial gaps.
We did not discuss the size relationships with \las.
However, since \twdfas are \dlas that never write, as a corollary of \cref{th:2DFA->F} we  get an
exponential size gap from \dlas to \flas. Indeed, the fact of having a unique symbol
to rewrite the tape content dramatically reduces the descriptional power.

We point out that this reduction happens also in the case of \flas accepting languages defined over a one-letter
alphabet, namely unary languages.
To this aim, for each integer~$n>0$, let us consider the language~$(a^{2^n})^*$.
This language can be accepted with a \dla having~$\bigoof{n}$ states and a work
alphabet of cardinality~$\bigoof{n}$, and with a \dla having~$\bigoof{n^3}$ states and 
a work alphabet of size not dependent on~$n$~\cite{PP19,PP23}.
However, each \twnfa accepting it requires at least~$2^n$ states~\cite{PP19}.
Considering the cost of the conversion of \flas into \ownfas (\cref{th:upperBoundF-LA}), we can conclude that
such a language cannot be accepted by any \fla having a number of states polynomial in~$n$.

\bibliographystyle{eptcs}
\bibliography{biblio}

\end{document}